\begin{document}

\title{\Large Optimal construction of a layer-ordered heap\thanks{Supported by grant
    number 1845465 from the National Science Foundation.}}

\author{Jake Pennington\thanks{University of Montana Department of Mathematics}
  \and
  Patrick Kreitzberg$^\dagger$
  \and
  Kyle Lucke\thanks{University of Montana Department of Computer Science}
  \and
  Oliver Serang$^\ddagger$\thanks{Corresponding Author, Email: oliver.serang@umontana.edu}
}

\maketitle



\fancyfoot[R]{\scriptsize{Copyright \textcopyright\ 2020\\
Copyright for this paper is retained by authors}}


\begin{abstract}
  \noindent The layer-ordered heap (LOH) is a simple, recently
  proposed data structure used in optimal selection on $X+Y$, the
  algorithm with the best known runtime for selection on
  $X_1+X_2+\cdots+X_m$, and the fastest method in practice for
  computing the most abundant isotope peaks in a chemical
  compound. Here, we introduce a few algorithms for constructing LOHs,
  analyze their complexity, and demonstrate that one algorithm is
  optimal for building a LOH of any rank $\alpha$. These results are
  shown to correspond with empirical experiments of runtimes when
  applying the LOH construction algorithms to a common task in machine
  learning.
\end{abstract}

\clearpage

\section{Introduction}
Layer-ordered heaps (LOHs) are used in algorithms that perform optimal
selection on $A+B$\cite{serang:optimal}, algorithms with the best
known runtime for selection on
$X_1+X_2+\cdots+X_m$\cite{kreitzberg:selection}, and the fastest known
method for computing the most abundant isotopes of a
compound\cite{kreitzberg:fast}.

A LOH of rank $\alpha$ consists of $\ell$ layers of values $L_0 \leq
L_1 \leq L_2 \leq \cdots \leq L_{\ell-1}$, where each $L_i$ is an
unordered array and the ratio of the sizes,
$\frac{|L_{i+1}|}{|L_{i}|}$, tends to $\alpha$ as the index, $i$,
tends to infinity. One possible way to achieve this is to have the
exact size of each layer, $|L_i|$, be $p_{i}-p_{i-1}$ where $p_{i}$,
the $i^{th}$ pivot, is calculated as $p_{i} =
\left\lceil\sum_{j=0}^{i}\alpha^j\right\rceil $. The size of the last
layer is the difference between the size of the array and the last
pivot. Figure~\ref{fig:loh} depicts a LOH of rank
$\alpha=2$. 

Soft heaps\cite{chazelle:soft} are qualitatively similar in that they
achieve partial ordering in theoretically efficient time; however, the
disorder from a soft heap occurs in a less regular manner, requiring
client algorithms to cope with a bounded number of ``corrupt''
items. Furthermore, they are less efficient in practice because of the
discontiguous data structures and the greater complexity of
implementation.

Throughout this paper, the process of constructing a LOH of rank
$\alpha$ from an array of length $n$ will be denoted ``LOHification.''
While LOHify with $\alpha=1$ is equivalent to comparison sort and
$\alpha\gg 1$ can be performed in $O(n)$\cite{kreitzberg:selection},
the optimal runtime for an arbitrary $\alpha$ is unknown. Likewise, no
optimal LOHify algorithm is known for arbitrary $\alpha$.

Here we derive a lower bound runtime for LOHification, describe a
few algorithms for LOHification, prove their runtimes, and demonstrate
optimality of one method for any $\alpha$. We then demonstrate the
practical performance of these methods on a non-parametric stats test.

\begin{figure}[H]
  \centering
  \includegraphics[width=.45\textwidth]{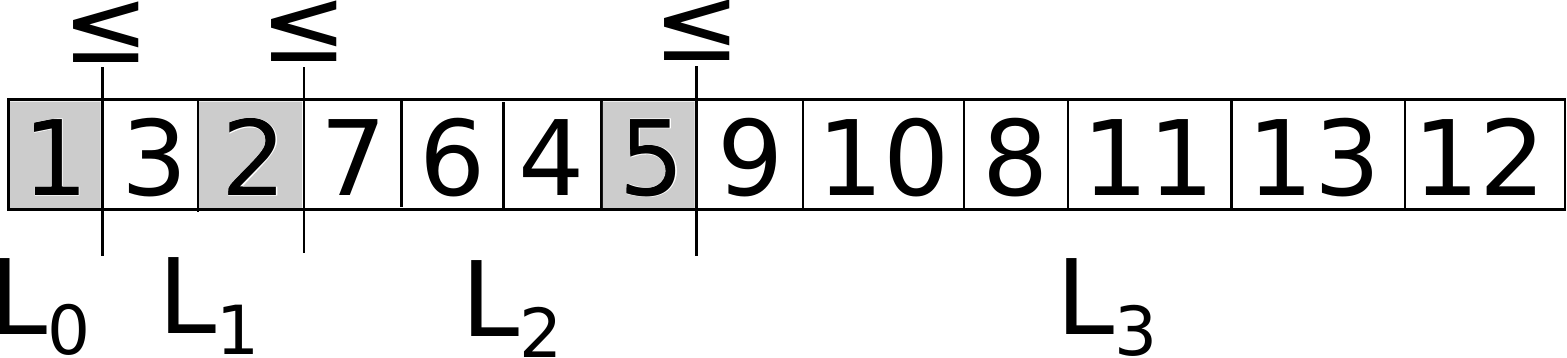}
  \caption{\textbf{A LOH of rank 2} Pivot indices are shaded in
    gray. Notice that the last layer is not full.
  \label{fig:loh}}
\end{figure}

\section{Methods}
\subsection{A lower bound on LOHification}
In this section we will prove an asymptotic lower bound on the complexity
of constructing a LOH in terms of $n$ and $\alpha$ by first proving bounds
on variables and then using those to bound the process as a whole. 
\subsubsection{Bounds on variables}
\begin{lemma}[Upper bound on number of layers]\thlabel{lemma:upper-bound-on-number-layers}
  An upper bound on the number of layers, $\ell$, in a LOH of $n$ elements is
  $\log_{\alpha}(n\cdot(\alpha-1)+1)+1$.
\end{lemma}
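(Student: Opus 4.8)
The plan is to turn the defining formula for the pivots into an explicit closed form, extract a clean inequality relating the last pivot index to $n$, and then invert that inequality to bound $\ell$. First I would recall that the $i^{th}$ pivot is $p_i = \left\lceil\sum_{j=0}^{i}\alpha^j\right\rceil$, and rewrite the geometric sum in closed form as $\sum_{j=0}^{i}\alpha^j = \frac{\alpha^{i+1}-1}{\alpha-1}$ (valid for $\alpha>1$). Since the ceiling of a quantity is at least that quantity, this immediately gives the one-sided bound $p_i \geq \frac{\alpha^{i+1}-1}{\alpha-1}$, which is the only direction needed here: a larger pivot growth forces fewer layers, so the lower bound on $p_i$ yields an upper bound on $\ell$.

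Next I would pin down the indexing. A LOH with $\ell$ layers $L_0,\dots,L_{\ell-1}$ has its last layer sized as $n - p_{\ell-2}$, so for the final layer to be nonempty the penultimate pivot must satisfy $p_{\ell-2} \leq n$. Combining this with the closed-form lower bound gives
\begin{equation*}
  \frac{\alpha^{\ell-1}-1}{\alpha-1} \;\leq\; p_{\ell-2} \;\leq\; n.
\end{equation*}
From here the argument is purely algebraic: multiply through by $\alpha-1 > 0$ to get $\alpha^{\ell-1}-1 \leq n(\alpha-1)$, rearrange to $\alpha^{\ell-1} \leq n(\alpha-1)+1$, and take $\log_\alpha$ of both sides to obtain $\ell-1 \leq \log_\alpha\!\left(n(\alpha-1)+1\right)$, i.e. $\ell \leq \log_\alpha\!\left(n(\alpha-1)+1\right)+1$, which is exactly the claimed bound.

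I expect the main obstacle to be bookkeeping rather than any deep idea: getting the off-by-one indices right (that the cumulative size through layer $L_i$ is $p_i$, that the last genuine pivot is $p_{\ell-2}$, and that $p_{-1}=0$ so $|L_0|=1$) and justifying cleanly why $p_{\ell-2} \leq n$ is the correct constraint to impose for a LOH that truly has $\ell$ layers. The closed form and the logarithmic inversion are routine, so the care lies entirely in matching the pivot indexing to the layer count and in using the $\geq$ direction of the ceiling so that the inequality points the right way for an upper bound on $\ell$.
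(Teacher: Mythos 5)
Your proof is correct and follows essentially the same route as the paper's: both start from the constraint that the last pivot $p_{\ell-2}=\left\lceil\sum_{i=0}^{\ell-2}\alpha^i\right\rceil$ is at most $n$, drop the ceiling using $\lceil x\rceil \geq x$, apply the geometric closed form, and invert with $\log_\alpha$. The only difference is presentational — you make the pivot indexing and the direction of the ceiling inequality explicit, which the paper leaves implicit.
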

\begin{proof}
Because the final pivot can be no more than $n$, the size of our array,
we have the following inequality:
\begin{eqnarray*}
  \left\lceil\sum_{i=0}^{\ell-2}\alpha^{i}\right\rceil &\leq& n\\
  \sum_{i=0}^{\ell-2}\alpha^{i} &\leq& n\\
  \frac{\alpha^{\ell-1}-1}{\alpha-1} &\leq& n\\
  \alpha^{\ell-1}-1 &\leq& n\cdot(\alpha-1)\\
  \alpha^{\ell-1} &\leq& n\cdot(\alpha-1)+1\\
  \ell-1 &\leq& \log_{\alpha}(n\cdot(\alpha-1)+1)\\
  \ell &\leq& \log_{\alpha}(n\cdot(\alpha-1)+1)+1
\end{eqnarray*}
\end{proof}
\begin{lemma}[Lower bound on number of layers]\thlabel{lemma:lower-bound-on-number-layers}
  A lower bound on the number of layers, $\ell$, in a LOH of $n$ elements is
  $\log_{\alpha}(n\cdot(\alpha-1)+1)$.
\end{lemma}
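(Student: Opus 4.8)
The plan is to mirror the proof of \thref{lemma:upper-bound-on-number-layers} but to invoke the complementary fact that $\ell$ layers are genuinely \emph{necessary}. Whereas the upper bound rests on the observation that the last pivot $p_{\ell-2}$ cannot exceed $n$, the lower bound should rest on the observation that $\ell-1$ full layers do not suffice to cover the array: if the cumulative ideal size through the $\ell^{th}$ layer, $p_{\ell-1}=\left\lceil\sum_{i=0}^{\ell-1}\alpha^{i}\right\rceil$, were at most $n$, then a further full layer would still fit and the heap would require at least $\ell+1$ layers. I would therefore begin from the inequality $\left\lceil\sum_{i=0}^{\ell-1}\alpha^{i}\right\rceil > n$.

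First I would strip the ceiling. Because $n$ is a positive integer, $\left\lceil\sum_{i=0}^{\ell-1}\alpha^{i}\right\rceil > n$ forces the unrounded sum itself to satisfy $\sum_{i=0}^{\ell-1}\alpha^{i} > n$; this is the step where the ceiling is discarded in the favorable direction, exactly opposite to the upper-bound argument. From there I would run the same algebraic chain in reverse: replace the geometric sum by $\frac{\alpha^{\ell}-1}{\alpha-1}$, clear the denominator to obtain $\alpha^{\ell}-1 \geq n\cdot(\alpha-1)$, add $1$ to both sides, and finally take $\log_{\alpha}$ to conclude $\ell \geq \log_{\alpha}(n\cdot(\alpha-1)+1)$. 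The strict inequality on the sum even yields a strict bound before rounding down to the stated $\geq$.

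The main obstacle is not the algebra, which is routine, but pinning down the boundary convention precisely enough that the ceiling can be removed in the correct direction. I would need to justify the strict inequality $p_{\ell-1} > n$ — that is, that the first $\ell-1$ full layers truly fail to cover the array (equivalently, that the last layer is nonempty, so $p_{\ell-2}<n$, and that no $(\ell+1)^{th}$ pivot fits) — because a careless version yielding only $p_{\ell-1}\geq n$, or only $\sum_{i=0}^{\ell-1}\alpha^{i} \geq n-1$, would degrade the result by an off-by-one term and fail to reproduce the stated bound. In particular, the degenerate configuration in which all $\ell$ layers are exactly full ($p_{\ell-1}=n$) is the delicate case to rule out or to absorb into the convention, and making the necessity of the $\ell^{th}$ layer explicit is the crux of the argument.
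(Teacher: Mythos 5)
Your proposal is correct and follows essentially the same route as the paper: both start from the premise that the cumulative pivot through $\ell$ full layers must exceed $n$ (i.e.\ $\lceil\sum_{i=0}^{\ell-1}\alpha^{i}\rceil > n$), strip the ceiling using the integrality of $n$, substitute the closed form of the geometric sum, and take $\log_{\alpha}$. Your version is marginally tighter in retaining the strict inequality $\sum_{i=0}^{\ell-1}\alpha^{i} > n$ where the paper relaxes to $\geq n$, but this makes no difference to the stated bound.
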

\begin{proof}
  Because an additional pivot (after the final pivot) must be more than $n$,
  the size of our array, we have the following inequality:
\begin{eqnarray*}
  \left\lceil\sum_{i=0}^{\ell-1}\alpha^{i}\right\rceil &>& n\\
  \sum_{i=0}^{\ell-1}\alpha^{i} &\geq& n \text{, because $n$ is discrete;} \\
  \frac{\alpha^{\ell}-1}{\alpha-1} &\geq& n\\
  \alpha^{\ell}-1 &\geq& n\cdot(\alpha-1)\\
  \alpha^{\ell} &\geq& n\cdot(\alpha-1)+1\\
  \ell &\geq& \log_{\alpha}(n\cdot(\alpha-1)+1)\\
  \ell &>& \log_{\alpha}(n\cdot(\alpha-1))
\end{eqnarray*}
\end{proof}

\begin{lemma}[Asypmtotic number of layers]\thlabel{lemma:asymptotic-number-of-layers}
  The number of layers as $n$ grows is asymptotic to $\log_{\alpha}(n\cdot(\alpha-1)+1)$.
\end{lemma}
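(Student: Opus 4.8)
The plan is to show that the number of layers $\ell$ is asymptotic to $\log_{\alpha}(n\cdot(\alpha-1)+1)$ by squeezing $\ell$ between the upper and lower bounds already established and arguing that both bounds agree with the target expression in the limit. By \thref{lemma:upper-bound-on-number-layers} and \thref{lemma:lower-bound-on-number-layers}, we already have
\[
  \log_{\alpha}(n\cdot(\alpha-1)+1) \leq \ell \leq \log_{\alpha}(n\cdot(\alpha-1)+1)+1.
\]
Writing $f(n) = \log_{\alpha}(n\cdot(\alpha-1)+1)$, the goal is to establish $\ell \sim f(n)$, i.e. that $\frac{\ell}{f(n)} \to 1$ as $n \to \infty$.

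First I would divide the chained inequality through by $f(n)$, which is legitimate and sense-preserving because $f(n) > 0$ for all sufficiently large $n$ (since $\alpha > 1$ makes the argument of the logarithm exceed $1$, hence the logarithm positive, and $f(n)$ grows without bound). This yields
\[
  1 \leq \frac{\ell}{f(n)} \leq 1 + \frac{1}{f(n)}.
\]
Next I would observe that $f(n) \to \infty$ as $n \to \infty$, so the additive constant $1$ becomes negligible relative to $f(n)$; concretely, $\frac{1}{f(n)} \to 0$, forcing the right-hand bound to $1$. Applying the squeeze theorem to the displayed inequality then gives $\frac{\ell}{f(n)} \to 1$, which is exactly the claim that $\ell$ is asymptotic to $f(n)$.

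I do not expect a genuine obstacle here, since the heavy lifting was done in the two preceding lemmas; the present statement is essentially a limit computation. The only point requiring mild care is the treatment of $\alpha$: the argument relies on $\alpha > 1$ (strictly) so that $f(n)$ is eventually positive and divergent, and it is worth noting explicitly that the boundary case $\alpha = 1$ is excluded (there the LOH reduces to a fully sorted array, and the logarithmic base is undefined). One might also remark that the two bracketing bounds differ by only an additive constant of $1$, which is what makes the asymptotic equivalence so immediate; the relative gap vanishes precisely because $\ell$ itself grows logarithmically and hence without bound.
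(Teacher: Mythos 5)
Your proof is correct and follows essentially the same squeeze argument as the paper for $\alpha>1$: both bound $\ell$ between $\log_{\alpha}(n\cdot(\alpha-1)+1)$ and that quantity plus $1$, then observe that the ratio of the two bounds tends to $1$ as $n\to\infty$. The only difference is that the paper additionally handles $\alpha=1$ by computing $\lim_{\alpha\to 1}\log_{\alpha}(n\cdot(\alpha-1)+1)=n$ to match the $n$ layers of a sorted array, whereas you explicitly exclude that boundary case --- a defensible choice, since $\log_{1}$ is undefined.
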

\begin{proof}
For $\alpha=1$, the number of layers is n.
\begin{eqnarray*}
  && \lim\limits_{\alpha\rightarrow 1} \log_{\alpha}(n\cdot(\alpha-1)+1) \\
  &=& \lim\limits_{\alpha\rightarrow 1} \frac{\log(n\cdot(\alpha-1)+1)}{\log(\alpha)} \\
  &=& \lim\limits_{\alpha\rightarrow 1} \frac{\left(\frac{n}{n\cdot(\alpha-1)+1}\right)}{\left(\frac{1}{\alpha}\right)} \text{~by L'H{\^o}pital's rule} \\
  &=& \lim\limits_{\alpha\rightarrow 1} \frac{n\cdot\alpha}{n\cdot(\alpha-1)+1} \\
  &=& n
\end{eqnarray*}
\indent For $\alpha>1$ we know $\log_{\alpha}(n\cdot(\alpha-1)+1) \leq \ell \leq \log_{\alpha}(n\cdot(\alpha-1)+1)+1$.
\begin{eqnarray*}
  && \lim\limits_{n\rightarrow\infty} \frac{\log_{\alpha}(n\cdot(\alpha-1)+1)+1}{\log_{\alpha}(n\cdot(\alpha-1)+1)} \\
  &=& \lim\limits_{n\rightarrow\infty} \frac{\log_{\alpha}(n\cdot(\alpha-1)+1)}{\log_{\alpha}(n\cdot(\alpha-1)+1)} + \frac{1}{\log_{\alpha}(n\cdot(\alpha-1)+1)} \\
  &=& 1 + 0 \\
  &=& 1
\end{eqnarray*}
\end{proof}
\begin{lemma}[Upper bound on size of layers]\thlabel{lemma:upper-bound-on-the-size-of-a-layer} 
  An upper bound on the size of layer $i$ is
  $|L_i|\leq \lceil\alpha^i\rceil$.
\end{lemma}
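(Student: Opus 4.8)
The plan is to work directly from the definition of layer size. By construction $|L_i| = p_i - p_{i-1}$, where the $i^{th}$ pivot is $p_i = \left\lceil \sum_{j=0}^{i} \alpha^j \right\rceil$. Writing $S_i = \sum_{j=0}^{i} \alpha^j$ for the partial geometric sum, the target inequality becomes
\[
  |L_i| = \lceil S_i \rceil - \lceil S_{i-1} \rceil \leq \lceil \alpha^i \rceil,
\]
and the key observation is that $S_i - S_{i-1} = \alpha^i$ exactly, so $\lceil \alpha^i \rceil = \lceil S_i - S_{i-1} \rceil$. The entire claim therefore reduces to a generic property of the ceiling function rather than anything specific to the geometric pivots.

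The engine of the proof is the inequality $\lceil a \rceil - \lceil b \rceil \leq \lceil a - b \rceil$ for all real $a, b$. First I would establish this. Setting $m = \lceil b \rceil$, which is an integer, I would use the fact that adding an integer commutes with the ceiling, giving $\lceil a \rceil - \lceil b \rceil = \lceil a \rceil - m = \lceil a - m \rceil$. Since $m = \lceil b \rceil \geq b$, we have $a - m \leq a - b$, and monotonicity of the ceiling then yields $\lceil a - m \rceil \leq \lceil a - b \rceil$, completing the sub-step.

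With that lemma in hand, the result follows by substituting $a = S_i$ and $b = S_{i-1}$:
\[
  |L_i| = \lceil S_i \rceil - \lceil S_{i-1} \rceil \leq \lceil S_i - S_{i-1} \rceil = \lceil \alpha^i \rceil.
\]
I would then dispatch the base case $i = 0$ separately, where the convention $p_{-1} = \lceil 0 \rceil = 0$ (empty sum) gives $|L_0| = \lceil 1 \rceil - 0 = 1 = \lceil \alpha^0 \rceil$, so the bound holds with equality there as well.

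I do not anticipate a serious obstacle here, since once the problem is recast in terms of $S_i$ the argument is essentially a one-line application of the ceiling inequality. The only point requiring a little care is the clean justification of $\lceil a \rceil - \lceil b \rceil \leq \lceil a - b \rceil$, specifically the step that rewrites the difference of ceilings using the integrality of $\lceil b \rceil$; this is where an otherwise routine manipulation could hide a sign or off-by-one slip, so I would state it explicitly rather than assert it.
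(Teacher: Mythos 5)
Your proposal is correct and follows essentially the same route as the paper: both start from $|L_i| = p_i - p_{i-1} = \lceil S_i\rceil - \lceil S_{i-1}\rceil$ and apply a ceiling inequality, your $\lceil a\rceil - \lceil b\rceil \leq \lceil a - b\rceil$ being just the rearranged form of the paper's subadditivity step $\lceil \alpha^i + S_{i-1}\rceil \leq \lceil \alpha^i\rceil + \lceil S_{i-1}\rceil$. Your write-up is in fact slightly more careful than the paper's, since you prove the ceiling inequality and handle $i=0$ explicitly.
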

\begin{proof}
$|L_i|$, as defined above, can be calculated by:
\begin{eqnarray*}
  |L_i| &=& p_i - p_{i+1} \\
  &=& \left\lceil\sum_{j=0}^{i}\alpha^j\right\rceil - \left\lceil\sum_{j=0}^{i-1}\alpha^j\right\rceil \\
  &\leq& \lceil\alpha^i\rceil+\left\lceil\sum_{j=0}^{i-1}\alpha^j\right\rceil - \left\lceil\sum_{j=0}^{i-1}\alpha^j\right\rceil \\
  &\leq& \lceil\alpha^i\rceil
\end{eqnarray*}
\end{proof}
\subsubsection{Lower bound of LOHification}
Here we will show that, for any $\alpha>1$, LOHification is in
$\Omega\left(n\log(\frac{1}{\alpha-1})+\frac{n\cdot\alpha\cdot\log(\alpha)}
{\alpha-1}\right)$.

\begin{theorem}\thlabel{thm:lower-bound-on-lohify}
  $\forall\alpha>1$, LOHification $\in\Omega\left(n\log(\frac{1}{\alpha-1})+\frac{n\cdot\alpha\cdot\log(\alpha)}{\alpha-1}\right)$
\end{theorem}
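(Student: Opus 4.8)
The plan is to use an information-theoretic (comparison-tree) argument. In the comparison model any LOHification algorithm is a decision tree whose leaves must separate every output the algorithm is required to produce, and the runtime is bounded below by the logarithm of the number of such distinguishable outputs. First I would count these outputs. Since each layer $L_i$ is an \emph{unordered} array, an algorithm is correct as soon as it assigns every element to its correct layer, and the relative order of the $|L_i|$ elements within a layer is irrelevant. Over all $n!$ orderings of the input, two orderings force the same output exactly when they induce the same assignment of positions to layers, so the number of distinct correct outputs is the multinomial coefficient $\binom{n}{|L_0|,|L_1|,\ldots,|L_{\ell-1}|} = \frac{n!}{\prod_{i=0}^{\ell-1}|L_i|!}$. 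A binary decision tree that separates this many outcomes has depth at least $\log_2\!\left(n!\right) - \sum_{i=0}^{\ell-1}\log_2\!\left(|L_i|!\right)$, which is therefore a lower bound on the number of comparisons, and hence on the runtime.

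Next I would apply Stirling's approximation, $\log(m!) = m\log m - m + O(\log m)$, to $\log(n!)$ and to each $\log(|L_i|!)$. Because $\sum_i |L_i| = n$, the linear $-m$ terms cancel, leaving the lower bound $n\log n - \sum_{i=0}^{\ell-1}|L_i|\log|L_i|$ up to lower-order terms. To turn this into a clean closed form I would upper-bound the subtracted sum, which can only decrease the difference and so preserves a valid lower bound, using \thref{lemma:upper-bound-on-the-size-of-a-layer} ($|L_i|\le\lceil\alpha^i\rceil$) together with the monotonicity of $x\log x$. This gives $\sum_i |L_i|\log|L_i| \le \log(\alpha)\sum_{i=0}^{\ell-1} i\,\alpha^i$ up to ceiling corrections.

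The core computation is then the weighted geometric sum $\sum_{i=0}^{\ell-1} i\,\alpha^i$, which has the closed form $\frac{\alpha\left(1-\ell\alpha^{\ell-1}+(\ell-1)\alpha^{\ell}\right)}{(\alpha-1)^2}$ with leading behavior $\frac{\alpha^{\ell}\left(\ell(\alpha-1)-\alpha\right)}{(\alpha-1)^2}$. Here I would substitute the two facts established earlier: $\alpha^{\ell}\sim n(\alpha-1)$ (from the layer-count bounds \thref{lemma:lower-bound-on-number-layers} and \thref{lemma:upper-bound-on-number-layers}) and $\ell\sim\log_\alpha\!\left(n(\alpha-1)+1\right)$ (\thref{lemma:asymptotic-number-of-layers}). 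This yields $\sum_i |L_i|\log|L_i| \sim n\log n + n\log(\alpha-1) - \frac{n\alpha\log\alpha}{\alpha-1}$, and subtracting it from $n\log n$ cancels the $n\log n$ term, leaving exactly $n\log\!\left(\frac{1}{\alpha-1}\right) + \frac{n\alpha\log\alpha}{\alpha-1}$, as claimed.

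I expect the main obstacle to be careful bookkeeping of the asymptotics rather than any single hard idea: I must confirm that the $O(\log m)$ Stirling remainders summed over the $\ell=O(\log n)$ layers, the $\lceil\cdot\rceil$ rounding in each $|L_i|$, and the partial final layer all stay below the two leading terms for every fixed $\alpha>1$. The conceptual crux, however, is the very first step — justifying that the number of outputs the algorithm must separate is $n!/\prod_i|L_i|!$ rather than $n!$ — since it is precisely the unordered nature of the layers that distinguishes LOHification from full sorting and produces the $\alpha$-dependent savings in the bound.
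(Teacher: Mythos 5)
Your proposal is correct and follows essentially the same route as the paper's proof: a decision-tree lower bound of $\log_2\binom{n}{|L_0|,\ldots,|L_{\ell-1}|}$, upper-bounding each $|L_i|$ by $\lceil\alpha^i\rceil$, evaluating the weighted geometric sum $\sum_i i\,\alpha^i$ in closed form, and substituting the layer-count asymptotics to cancel the $n\log n$ term. The only cosmetic difference is that you invoke Stirling's formula explicitly where the paper uses $\log(m!)\in\Theta(m\log m)$ together with its ceiling-asymptotics lemma.
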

\begin{proof}
If $\alpha=1$, we are sorting, which is known to be in $\Omega(n\log(n))$.
Hence, for the following derivation, we shall assume that $\alpha>1$.
From $n!$ possible unsorted arrays, LOHification produces one of
$|L_0|!\cdot |L_1|! \cdots |L_{\ell-1}|!$ possible valid results; hence,
using an optimal decision tree, $r(n)$ is in $\Omega\left(
\log_2\left(\binom{n}{|L_0|,|L_1|,\ldots,|L_{\ell-1}|}\right) \right)$; hence,
\begin{eqnarray*}
  && r(n)\\
  &\in& \Omega\left(\log\left(\frac{n!}{\prod_{i=0}^{\ell-1}(|L_i|!)}\right)\right)\\
  &=& \Omega\left(n\log(n)-\sum_{i=0}^{\ell-1}\log(|L_i|!)\right)\\
  &=& \Omega\left(n\log(n)-\sum_{i=0}^{\ell-1}\log(\lceil\alpha^{i}\rceil!)\right) \text{by \thref{lemma:upper-bound-on-the-size-of-a-layer}} \\
  &=& \Omega\left(n\log(n)-\sum_{i=0}^{\ell-1}\lceil\alpha^{i}\rceil\cdot\log(\lceil\alpha^{i}\rceil)\right) \\
  && \text{(since $\log(n!) \in \Theta(n\log(n))$)}\\
  &=& \Omega\left(n\log(n)-\sum_{i=0}^{\ell-1}\alpha^{i}\cdot\log(\alpha^{i})\right) \text{by \thref{lemma:asymptotic-of-log-of-ceiling}}\\
  &=& \Omega\left(n\log(n)-\sum_{i=0}^{\ell-1}i\cdot\alpha^i\log(\alpha)\right)\\
  &=& \Omega\left(n\log(n)-\log(\alpha)\cdot\sum_{i=0}^{\ell-1}i\cdot\alpha^i\right)\\
  &=& \Omega\left(n\log(n)-\log(\alpha)\cdot \right. \\
  && \left.\left(\frac{\alpha^{\ell+1}\cdot(\ell-1)+\alpha-\alpha^{\ell}\cdot\ell}{(\alpha-1)^2}\right)\right)\\
  &=& \Omega\left(n\log(n)-\log(\alpha) \right. \\
  && \cdot \left(\frac{(\alpha^{\log_{\alpha}(n\cdot(\alpha-1)+1)+1}\cdot\log_{\alpha}(n\cdot(\alpha-1)+1)-1)}{(\alpha-1)^2} \right.\\
  && +\frac{\alpha}{(\alpha-1)^2}  \\
  && - \left. \left. \frac{\alpha^{\log_{\alpha}(n\cdot(\alpha-1)+1)}\cdot\log_{\alpha}(n\cdot(\alpha-1)+1)}{(\alpha-1)^2} \right)\right) \\
  && \text{by \thref{lemma:asymptotic-number-of-layers}}\\
  &=& \Omega\left(n\log(n)-\log(\alpha) \right. \cdot \\
  && \left(\frac{(n\cdot(\alpha-1)+1)\cdot\alpha\cdot(\log_{\alpha}(n\cdot(\alpha-1)+1)-1)}{(\alpha-1)^2} \right. \\
  && +\frac{\alpha}{(\alpha-1)^{2}}  \\
  && - \left. \left. \frac{(n\cdot(\alpha-1)+1)\cdot\log_{\alpha}(n\cdot(\alpha-1)+1)}{(\alpha-1)^{2}}\right)\right) \\
  &=& \Omega\left(n\log(n)-\right( \\
  && \frac{(n\cdot(\alpha-1)+1)\cdot\alpha\cdot(\log(n\cdot(\alpha-1)+1)-\log(\alpha))}{(\alpha-1)^2}\\
  && + \frac{\alpha\log(\alpha)}{(\alpha-1)^2} \\
  && - \left.\left.\frac{(n\cdot(\alpha-1)+1)\log (n\cdot(\alpha-1)+1)}{(\alpha-1)^2}\right)\right) \\
  &=& \Omega(n\log(n)- \\
  && \left(\frac{(n\cdot(\alpha-1)+1)\cdot\alpha\log (n\cdot(\alpha-1)+1)}{(\alpha-1)^2}\right.\\
  && - \left.\left. \frac{(n\cdot(\alpha-1)+1)\cdot\alpha\log(\alpha)}{(\alpha-1)^2} + \frac{\alpha\log(\alpha)}{(\alpha-1)^2}\right.\right.\\
  && - \left.\left.\frac{(n\cdot(\alpha-1)+1)\cdot\log (n\cdot(\alpha-1)+1)}{(\alpha-1)^2}\right)\right)\\
  &=& \Omega(n\log(n)- \\
  && \left(\frac{(n\cdot(\alpha-1)+1)\cdot\log (n\cdot(\alpha-1)+1)\cdot(\alpha-1)}{(\alpha-1)^2}\right. \\
  && + \left.\left. \frac{\alpha\log(\alpha)}{(\alpha-1)^2}- \frac{(n\cdot(\alpha-1)+1)\cdot\alpha\log(\alpha)}{(\alpha-1)^2}\right)\right) \\
  &=& \Omega(n\log(n)- \\
  && \frac{(n\cdot(\alpha-1)+1)\cdot\log (n\cdot(\alpha-1)+1)}{\alpha-1}-\frac{\alpha\log(\alpha)}{(\alpha-1)^2}  \\
  && + \left. \frac{(n\cdot(\alpha-1)+1)\cdot\alpha\log(\alpha)}{(\alpha-1)^2}\right) \\
  &=& \Omega(n\log(n)-n\cdot\log(n\cdot(\alpha-1)+1) \\
  && -\frac{\log(n\cdot(\alpha-1)+1)}{\alpha-1} \\
  && - \left. \frac{\alpha\log(\alpha)}{(\alpha-1)^2}+\frac{n\cdot\alpha\log(\alpha)}{\alpha-1}+\frac{\alpha\log(\alpha)}{(\alpha-1)^2}\right)\\
  &=& \Omega(n\log(n)-n\cdot\log(n\cdot(\alpha-1)+1) \\
  && \left. -\frac{\log(n\cdot(\alpha-1)+1)}{\alpha-1} + \frac{n\cdot\alpha\log(\alpha)}{\alpha-1}\right)\\
  &=& \Omega\left(n\log\left(\frac{n}{n\cdot(\alpha-1)+1}\right)-\frac{\log(n\cdot(\alpha-1)+1)}{\alpha-1} \right. \\
  && \left.+\frac{n\cdot\alpha\log(\alpha)}{\alpha-1}\right)\\
  &\subseteq& \Omega\left(n\log\left(\frac{n}{n\cdot(\alpha-1)+1}\right)+\frac{n\cdot\alpha\log(\alpha)}{\alpha-1}\right) \\
  && \text{by \thref{lemma:num-layers-in-little-oh-n}} \\
  &=& \Omega\left(n\log\left(\frac{1}{\alpha-1}\right)+\frac{n\cdot\alpha\log(\alpha)}{\alpha-1}\right) \text{by \thref{lemma:asymptotic-bound}}
\end{eqnarray*}
\end{proof}

In some applications, it may be useful to have a bound on LOHification
that includes $\alpha = 1$.

\begin{theorem}\thlabel{thm:full-lower-bound-on-lohify}
  $\forall$ $\alpha\geq 1$, LOHification $\in\Omega(n\log(\frac{n}{n\cdot(\alpha-1)+1})+\frac{n\cdot\alpha\cdot\log(\alpha)}{\alpha-1})$
\end{theorem}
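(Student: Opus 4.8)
The plan is to split the argument into the two regimes $\alpha > 1$ and $\alpha = 1$, since the target expression is governed by different arguments in each and has a removable singularity at $\alpha = 1$. The key observation is that the bound claimed here is (asymptotically) the same expression already obtained inside the proof of \thref{thm:lower-bound-on-lohify}; the extension to $\alpha \geq 1$ is really just a matter of stopping one simplification step earlier and then gluing on the boundary case.

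For $\alpha > 1$ I would simply invoke \thref{thm:lower-bound-on-lohify}. The derivation proving that theorem establishes, after bounding the layer sizes via \thref{lemma:upper-bound-on-the-size-of-a-layer}, expanding the multinomial coefficient, and discarding the lower-order term by \thref{lemma:num-layers-in-little-oh-n}, the intermediate bound $r(n) \in \Omega\left(n\log\left(\frac{n}{n\cdot(\alpha-1)+1}\right)+\frac{n\cdot\alpha\log(\alpha)}{\alpha-1}\right)$. This is exactly the expression claimed here; the only difference from \thref{thm:lower-bound-on-lohify} is that the latter applies \thref{lemma:asymptotic-bound} one further time to rewrite $n\log\left(\frac{n}{n\cdot(\alpha-1)+1}\right)$ as $n\log\left(\frac{1}{\alpha-1}\right)$. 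Since that rewriting is an equality of $\Omega$-classes, the two bounds coincide for $\alpha > 1$, and no new work is required.

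For $\alpha = 1$, LOHification is comparison sort, which is in $\Omega(n\log(n))$, so it remains only to confirm that the target expression reduces to $\Theta(n\log(n))$ there, making the two statements consistent. The first summand evaluates directly: $n\log\left(\frac{n}{n\cdot(1-1)+1}\right) = n\log(n)$. The second summand $\frac{n\cdot\alpha\log(\alpha)}{\alpha-1}$ is of indeterminate form $\frac{0}{0}$ at $\alpha = 1$ and must be read through its continuous extension; applying L'H\^opital's rule to $\frac{\alpha\log(\alpha)}{\alpha-1}$ yields a limit of $1$, so the second summand tends to $n$. Hence the claimed bound becomes $\Omega(n\log(n) + n) = \Omega(n\log(n))$, which matches the sorting lower bound.

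The main obstacle is not computational but definitional: the second term carries a removable singularity at $\alpha = 1$, so I would have to state explicitly that the asymptotic notation is understood via the continuous extension of that term. Once this convention is fixed, the theorem follows by combining the $\alpha > 1$ bound inherited from \thref{thm:lower-bound-on-lohify} with the $\alpha = 1$ sorting bound just verified.
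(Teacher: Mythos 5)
Your proposal is correct and follows essentially the same route as the paper's own proof: cite the intermediate bound from the proof of \thref{thm:lower-bound-on-lohify} for $\alpha>1$, then handle $\alpha=1$ via the sorting lower bound together with an L'H\^opital evaluation showing the expression reduces to $\Omega(n\log(n))$ there. Your explicit remark that the second term must be read through its continuous extension at $\alpha=1$ is a point the paper handles only implicitly, but it does not change the argument.
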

\begin{proof}
  From our proof in \thref{thm:lower-bound-on-lohify}, we know that
  LOHification with $\alpha>1$ is in $\Omega(n\log(\frac{n}
  {n\cdot(\alpha-1)+1})+\frac{n\cdot\alpha\cdot\log(\alpha)}{\alpha-1})$.
  Because LOHification with $\alpha=1$ is sorting, which is in $\Omega(n\log(n))$,
  it just remains to show that our bound is $\Omega(n\log(n))$ at $\alpha=1$.
\begin{eqnarray*}
  &&  \Omega\left(\lim\limits_{\alpha\rightarrow 1} n\log\left(\frac{n}{n\cdot(\alpha-1)+1}\right)+\frac{n\cdot\alpha\log(\alpha)}{\alpha-1}\right) \\
  &=& \Omega\left(n\log(n) + \lim\limits_{\alpha\rightarrow 1} \frac{n\cdot\alpha\log(\alpha)}{\alpha-1}\right)\\
  &=& \Omega\left(n\log(n) + \lim\limits_{\alpha\rightarrow 1} \frac{n\cdot(\log(\alpha)+1)}{1}\right) \\
  && \text{~by L'H{\^o}pital's rule}\\
  &=& \Omega\left(n\log(n) + n \right)\\
  &=& \Omega(n\log(n))
\end{eqnarray*}
\end{proof}

We have now established LOHification to be in $\Omega(n\log(\frac{n}
{n\cdot(\alpha-1)+1})+\frac{n\cdot\alpha\cdot\log(\alpha)}{\alpha-1})$ for
any $\alpha$ at least 1. In the following sections, we will explore different
algorithms for LOHification, their complexity, and for what values of
$\alpha$ they are optimal.

\subsection{LOHification via sorting}
Sorting in $\Theta(n \log(n))$ trivially LOHifies an array (sorting can
be done using any LOHification method by setting $\alpha=1$); note
that this also guarantees the LOH property for any $\alpha\geq 1$, because
any partitioning of layers in a sorted array will have
$L_i \leq L_{i+1}$. Hence, LOHification is in $O(n\log(n))$.

\subsubsection{When sorting is optimal}
$\alpha=1$ indicates each layer has $|L_i|=1$, meaning an ordering
over all elements; this means that sorting must be performed. Thus,
for $\alpha=1$, sorting is optimal. Furthermore, we can find
an $\alpha^*$ where sorting is optimal for all $\alpha\leq\alpha^*$.
Doing this, we find that, for any constant, $C > 0$, sorting is
optimal for $\alpha^* \leq 1+\frac{C}{n}$.

\begin{theorem}\thlabel{thm:optimal-alpha-for-sorting}
  For any constant, $C > 0$, sorting is optimal for $\alpha\leq\left(1+\frac{C}{n}\right):=\alpha^*$
\end{theorem}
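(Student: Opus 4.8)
The plan is to show that, across the entire range $\alpha \in [1, \alpha^*]$, the lower bound established in \thref{thm:full-lower-bound-on-lohify} already matches the $O(n\log(n))$ cost of sorting, forcing the complexity to be $\Theta(n\log(n))$ and hence establishing optimality of sorting. Since sorting LOHifies any array for every $\alpha \geq 1$ in $O(n\log(n))$, it suffices to prove the matching lower bound $\Omega(n\log(n))$ for all $\alpha \leq \alpha^* = 1 + \frac{C}{n}$.

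First I would observe that the second summand of the lower bound, $\frac{n\cdot\alpha\cdot\log(\alpha)}{\alpha-1}$, is nonnegative for every $\alpha > 1$, so it can only increase the bound; the work therefore reduces to bounding the first summand, $n\log\left(\frac{n}{n\cdot(\alpha-1)+1}\right)$, from below. The key step is to control the denominator $n\cdot(\alpha-1)+1$ uniformly over the whole range. Using $\alpha \leq 1 + \frac{C}{n}$ gives $\alpha - 1 \leq \frac{C}{n}$, hence $n\cdot(\alpha-1)+1 \leq C + 1$, a constant independent of $n$. Substituting this bound yields
\[
n\log\left(\frac{n}{n\cdot(\alpha-1)+1}\right) \geq n\log\left(\frac{n}{C+1}\right) = n\log(n) - n\log(C+1),
\]
which is $\Theta(n\log(n))$ because $C$ is a fixed constant.

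Combining these two observations, the full lower bound of \thref{thm:full-lower-bound-on-lohify} is $\Omega(n\log(n))$ for every $\alpha \in [1, \alpha^*]$. Together with the $O(n\log(n))$ upper bound from sorting, this pins the complexity of LOHification at $\Theta(n\log(n))$ throughout the range, so sorting is optimal for all $\alpha \leq \alpha^*$.

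I do not anticipate a serious obstacle: the argument hinges on the single algebraic fact that the chosen $\alpha^*$ keeps $n\cdot(\alpha-1)+1$ bounded by a constant, which collapses the first logarithmic term to $\Theta(n\log(n))$. The only point requiring mild care is that the claim must hold on the whole interval $\alpha \leq \alpha^*$, not merely at the endpoint; this is handled cleanly by the bound $n\cdot(\alpha-1)+1 \leq C+1$, which holds simultaneously for every such $\alpha$, so no separate case analysis or derivative computation is needed.
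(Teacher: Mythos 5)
Your proof is correct, but it reaches the conclusion by a noticeably different route than the paper. The paper first invokes a monotonicity argument (decreasing $\alpha$ can only increase the number of layers and hence the work) to reduce the claim to the single endpoint $\alpha=\alpha^*$, and then evaluates the \emph{entire} lower bound of \thref{thm:lower-bound-on-lohify} at $\alpha^*$, which forces it to simplify the second summand into $(n^2+n)\cdot\log\left(1+\frac{C}{n}\right)$ and to prove a separate appendix lemma (\thref{lemma:optimal-alpha-for-sorting}) showing that this term is $o(n\log(n))$. You instead work uniformly over the whole interval $[1,\alpha^*]$: you discard the second summand outright because it is nonnegative, and you control the first summand with the single inequality $n\cdot(\alpha-1)+1\leq C+1$, which collapses it to $n\log(n)-n\log(C+1)\in\Theta(n\log(n))$ simultaneously for every $\alpha$ in the range. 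Your version buys two things: it dispenses with the informal ``less $\alpha$ means more work'' monotonicity step (which in the paper is asserted about the problem's complexity rather than derived from the bound itself), and it needs no auxiliary limit computation. The paper's version, in exchange, computes the exact asymptotic form of the bound at $\alpha^*$, which is slightly more information than optimality alone requires. One caveat applies equally to both arguments: $\alpha^*$ depends on $n$, so plugging it into a bound stated asymptotically in $n$ tacitly assumes the bound holds uniformly in $\alpha$; since the underlying counting bound $\log\binom{n}{|L_0|,\ldots,|L_{\ell-1}|}$ is exact for each $n$, this is harmless, but it is glossed over in both treatments.
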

\begin{proof}
  Because decreasing $\alpha$ can only increase the number of layers (and therefore the work), its suffices to show that sorting is optimal at $\alpha^*=\left(1+\frac{C}{n}\right)$.

\begin{eqnarray*}
  \text{r($n$)} &\in& \Omega\left(n\log\left(\frac{1}{\alpha^*-1}\right)+\frac{n\cdot\alpha^*\cdot\log(\alpha^*)}{\alpha^*-1}\right) \\
  &=& \Omega\left(n\log\left(\frac{1}{\left(1+\frac{C}{n}\right)-1}\right)\right.\\
  && \left. +\frac{n\cdot\left(1+\frac{C}{n}\right)\cdot\log(\left(1+\frac{C}{n}\right))}{\left(1+\frac{C}{n}\right)-1}\right) \\
  &=& \Omega\left(n\log\left(\frac{n}{C}\right)+\frac{(n+C)\cdot\log\left(1+\frac{C}{n}\right)}{\frac{C}{n}}\right) \\
  &=& \Omega(n\log(n)-n\log(C) \\
  && \left. +\frac{(n^2+C\cdot n)\cdot\log\left(1+\frac{C}{n}\right)}{C}\right) \\
  &=& \Omega\left(n\cdot\log(n) + (n^2+n)\cdot\log\left(1+\frac{C}{n}\right)\right) \\
  &=& \Omega(n\cdot\log(n) + o(n\cdot\log(n))) \text{~\thref{lemma:optimal-alpha-for-sorting}} \\
  &\subseteq& \Omega(n\cdot\log(n)) \\
  && \text{Therefore;} \\
  \text{LOH} &\in& \Theta(n\cdot\log(n))~~\forall~~\alpha\leq\left(1+\frac{C}{n}\right)
\end{eqnarray*}
\end{proof}

Because sorting is optimal for these values of $\alpha$, we know
that, for all $\alpha$ at most $a^*=\left(1+\frac{C}{n}\right)$, LOHification
is in $\Theta(n\log(\frac{n}{n\cdot(\alpha-1)+1})+\frac{n\cdot\alpha\cdot
\log(\alpha)}{\alpha-1})$. Next we will look at LOHification methods that
are based on selection.

\subsection{LOHification via iterative selection}
LOHs can be constructed using one-dimensional selection
(one-dimensional selection can be done in linear time via
median-of-medians\cite{blum:time}). In this section, we
will describe a LOHification algorithm that selects away
layers from the end of the array, prove its complexity,
and find for which values of $\alpha$ it is optimal.

\subsubsection{Selecting away the layer with the greatest index}

This algorithm repeatedly performs a linear-time one-dimensional
selection on the value at the
first index (were the array in sorted order) in $L_{\ell-1}$, then the
LOH is partitioned about this value. This is repeated for
$L_{\ell-2}$, $L_{\ell-3}$, and so on until the LOH has been
partitioned about the minimum value in each layer. We will prove
that this algorithm is in $\Theta\left(\frac{\alpha\cdot n}{\alpha-1}
\right)$.

\begin{lemma}\thlabel{lemma:lower-bound-on-select-away-greatest-index}
  Selecting away the layer with the greatest index is in $\Omega\left(\frac{\alpha\cdot n}{\alpha-1}\right)$
\end{lemma}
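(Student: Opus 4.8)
The plan is to lower-bound the runtime by summing the costs of the individual selections and recognizing that this sum is a geometric series whose value is $\Theta\!\left(\frac{\alpha n}{\alpha-1}\right)$. First I would observe that the algorithm performs exactly one selection per layer (save the first): to peel off $L_{\ell-1}$ it selects on the whole array of size $n$, to peel off $L_{\ell-2}$ it selects on the remaining prefix $L_0\cup\cdots\cup L_{\ell-2}$ of size $p_{\ell-2}$, and in general to peel off $L_k$ it selects on a prefix of size $p_k$. The primitive fact driving the bound is that any selection on an array of size $m$ must inspect every element and hence costs $\Omega(m)$; summing these costs gives $r(n)\in\Omega\!\left(n+\sum_{k=1}^{\ell-2}p_k\right)$.

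Next I would bound the pivots from below. Dropping the ceiling, $p_k=\left\lceil\sum_{j=0}^{k}\alpha^{j}\right\rceil\ge\sum_{j=0}^{k}\alpha^{j}=\frac{\alpha^{k+1}-1}{\alpha-1}$, and the crucial point is to retain this full closed form with its $\frac{1}{\alpha-1}$ factor rather than the cruder $p_k\ge\alpha^{k}$: it is precisely that extra factor which produces the $\frac{1}{\alpha-1}$ in the target bound. Summing the resulting geometric series yields $\sum_{k=1}^{\ell-2}p_k\ge\frac{\alpha^{\ell}-\alpha^{2}}{(\alpha-1)^{2}}-\frac{\ell-2}{\alpha-1}$.

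I would then substitute the lower bound on the number of layers. By \thref{lemma:lower-bound-on-number-layers} we have $\alpha^{\ell}\ge n\cdot(\alpha-1)+1$, so the leading term becomes $\frac{n\cdot(\alpha-1)+1-\alpha^{2}}{(\alpha-1)^{2}}=\frac{n}{\alpha-1}-\frac{\alpha+1}{\alpha-1}$. Combining this with the additive $n$ contributed by the first selection, and using the identity $n+\frac{n}{\alpha-1}=\frac{\alpha n}{\alpha-1}$, gives $r(n)\ge\frac{\alpha n}{\alpha-1}-\frac{\alpha+1}{\alpha-1}-\frac{\ell-2}{\alpha-1}$. Since $\ell\in O(\log n)$ by \thref{lemma:upper-bound-on-number-layers}, both subtracted terms are $o\!\left(\frac{\alpha n}{\alpha-1}\right)$ for fixed $\alpha>1$ as $n\to\infty$, so $r(n)\in\Omega\!\left(\frac{\alpha n}{\alpha-1}\right)$.

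The main obstacle I anticipate is the bookkeeping in the middle two steps rather than any deep idea: one must resist the loose bound $p_k\ge\alpha^{k}$, which only yields $\Omega(n)$ and discards the $\frac{1}{\alpha-1}$ factor entirely, and instead carry the exact geometric closed form through the substitution of $\alpha^{\ell}$. The remaining care is in verifying that the $\ell$-dependent term $\frac{\ell-2}{\alpha-1}$ and the $n$-independent constant $\frac{\alpha+1}{\alpha-1}$ are genuinely lower order, so that the clean $\frac{\alpha n}{\alpha-1}$ term survives.
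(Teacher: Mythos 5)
Your proposal is correct and follows essentially the same route as the paper: both express the cost of the $i$-th selection as the size of the array remaining at that stage, sum the resulting (near-)geometric series to obtain $\frac{\alpha\cdot n}{\alpha-1}$ plus lower-order corrections, substitute $\alpha^{\ell}\ge n\cdot(\alpha-1)+1$ from \thref{lemma:lower-bound-on-number-layers}, and discard the $O(\ell)$-sized remainders. The only cosmetic difference is that you write the remaining size directly as the prefix pivot $p_k$ and bound it below by the closed-form geometric sum, whereas the paper writes it as $n$ minus the removed suffix and bounds each removed layer above by $\lceil\alpha^j\rceil\le\alpha^j+1$.
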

\begin{proof}
  By using a linear time one-dimensional selection,  we can see that
  the runtime for selecting away the layer with the greatest index is:
\begin{eqnarray*}
  && r(n) \\
  &\in& \Theta\left(\sum_{i=0}^{\ell-1}\left( n - \sum_{j=\ell-i}^{\ell-1} |L_j|\right)\right)\\
  &\subseteq& \Omega\left(\sum_{i=0}^{\ell-1}\left( n - \sum_{j=\ell-i}^{\ell-1} \lceil\alpha^j\rceil\right)\right)\\
  &\subseteq& \Omega\left(\sum_{i=0}^{\ell-1}\left( n - \sum_{j=\ell-i}^{\ell-1} (\alpha^{j}+1)\right)\right)\\
  &=& \Omega\left(\sum_{i=0}^{\ell-1}\left( n - i - \sum_{j=\ell-i}^{\ell-1} \alpha^j\right)\right)\\
  &=& \Omega\left(n\cdot \ell - \frac{\ell^{2}-\ell}{2} -\frac{1}{\alpha-1} \cdot \left(\sum_{i=0}^{\ell-1}(\alpha^{\ell}-\alpha^{\ell-i})\right)\right)\\
  &=& \Omega\left(n\cdot \ell - \frac{\ell^{2}-\ell}{2} -\frac{1}{\alpha-1} \right. \\
  && \left. \cdot \frac{(\ell-1)\alpha^{\ell+1}-\ell\cdot\alpha^\ell+\alpha}{\alpha-1}\right)\\
  &=& \Omega\left(n\cdot \ell - \frac{\ell^{2}-\ell}{2} -\frac{1}{\alpha-1} \right.\\
  && \left. \cdot \frac{(\alpha-1)\cdot\ell\cdot\alpha^{\ell}-\alpha\cdot(\alpha^{\ell}-1)}{\alpha-1}\right)\\
  &=& \Omega\left(n\cdot \ell - \frac{\ell^{2}-\ell}{2} -\frac{\ell\cdot\alpha^\ell}{\alpha-1} + \frac{\alpha\cdot(\alpha^{\ell}-1)}{(\alpha-1)^2} \right)\\
  &=& \Omega\left(\ell\cdot\left(n - \frac{\ell-1}{2} -\frac{\alpha^\ell}{\alpha-1}\right) + \frac{\alpha\cdot(\alpha^{\ell}-1)}{(\alpha-1)^2} \right)\\
  &\subseteq& \Omega\left(\ell\cdot\left(n - \frac{\ell-1}{2} -\frac{n\cdot(\alpha-1)+1}{\alpha-1}\right) \right. \\
  && \left. + \frac{\alpha\cdot n\cdot(\alpha-1)}{(\alpha-1)^2} \right) \text{by ~\thref{lemma:lower-bound-on-number-layers}}\\
  &=& \Omega\left(\ell\cdot\left( - \frac{\ell-1}{2} -\frac{1}{\alpha-1}\right) + \frac{\alpha\cdot n}{\alpha-1}\right) \\
  &=& \Omega\left(\frac{\alpha\cdot n}{\alpha-1} - \left(\frac{\ell^{2}-\ell}{2}+\frac{\ell}{\alpha-1}\right)\right)\\
  &\subseteq& \Omega\left(\frac{\alpha\cdot n}{\alpha-1} - \right. \\
  && \left(\frac{(\log_{\alpha}(n\cdot(\alpha-1)+1))^{2} - \log_{\alpha}(n\cdot(\alpha-1))}{2}\right) \\
  && - \left.\left(\frac{\log_{\alpha}(n\cdot(\alpha-1))}{\alpha-1} \right)\right) \\
  && \text{by ~\thref{lemma:upper-bound-on-number-layers} and \thref{lemma:lower-bound-on-number-layers}}\\
  &\subseteq& \Omega\left(\frac{\alpha\cdot n}{\alpha-1}\right) \text{~\thref{lemma:previously-lemma-5}  and \thref{lemma:previously-lemma-6}}
\end{eqnarray*}
\end{proof}

\begin{theorem}\thlabel{thm:tight-bounds-on-select-away-greatest-index}
  Selecting away the layer with the greatest index is in $\Theta\left(\frac{\alpha\cdot n}{\alpha-1}\right)$
\end{theorem}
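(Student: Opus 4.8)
The lower bound is already supplied by \thref{lemma:lower-bound-on-select-away-greatest-index}, so to obtain the tight $\Theta$ it remains only to prove the matching upper bound, $r(n)\in O\left(\frac{\alpha\cdot n}{\alpha-1}\right)$; the theorem then follows immediately by combining the two. The plan is therefore to revisit the exact cost expression $r(n)\in\Theta\left(\sum_{i=0}^{\ell-1}\left(n-\sum_{j=\ell-i}^{\ell-1}|L_j|\right)\right)$ established in that lemma's proof and bound it from above, mirroring the steps already used to bound it from below.

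The key simplification I would make first is to reindex the sum. Since $n-\sum_{j=\ell-i}^{\ell-1}|L_j|$ is exactly the combined size of the bottom $\ell-i$ layers, i.e.\ the pivot $p_{\ell-i-1}$ (with the convention $p_{\ell-1}=n$), the whole cost collapses to $\sum_{m=0}^{\ell-1}p_m$. This reindexing is what makes the upper bound clean: it sidesteps the one place where a naive mirror of the lower-bound argument would break, namely the final layer $L_{\ell-1}$, which may be short and so need not satisfy the per-layer lower bound $|L_j|\geq\alpha^j-1$ that the interior layers do. Writing $p_m=\left\lceil\frac{\alpha^{m+1}-1}{\alpha-1}\right\rceil\leq\frac{\alpha^{m+1}-1}{\alpha-1}+1$ and summing the resulting geometric series yields a dominant term $\frac{\alpha^{\ell}-\alpha}{(\alpha-1)^2}$ together with only $O(\ell)$ lower-order corrections (the reindexing even avoids the $\Theta(\ell^2)$ term that appears in the lower-bound derivation).

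From here the analysis is the direct upper-bound analogue of the computation already carried out. Applying the bound $\alpha^{\ell-1}\leq n\cdot(\alpha-1)+1$ from \thref{lemma:upper-bound-on-number-layers} turns the dominant term into $\frac{\alpha\cdot n}{\alpha-1}+\frac{\alpha}{(\alpha-1)^2}$, whose leading piece is exactly the target, and the remaining $\Theta(\ell)$ and $\frac{\ell}{\alpha-1}$ corrections are controlled just as in the proof of \thref{lemma:lower-bound-on-select-away-greatest-index}, via \thref{lemma:previously-lemma-5} and \thref{lemma:previously-lemma-6}, which confirm they are dominated by $\frac{\alpha\cdot n}{\alpha-1}$. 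Combining $r(n)\in O\left(\frac{\alpha\cdot n}{\alpha-1}\right)$ with the lower bound of \thref{lemma:lower-bound-on-select-away-greatest-index} gives the claim. The one genuinely delicate point, and the step I would verify most carefully, is the ceiling bookkeeping in $p_m$ together with the possibly-short final layer; once the reindexing to cumulative pivots absorbs the latter, nothing new beyond the established geometric-sum machinery is required.
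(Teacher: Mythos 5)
Your proposal is correct and, at the top level, follows the same strategy as the paper: both take the exact cost $\sum_{i=0}^{\ell-1}\bigl(n-\sum_{j=\ell-i}^{\ell-1}|L_j|\bigr)$, bound it above by summing a geometric series, convert $\alpha^{\ell}$ into $n\cdot(\alpha-1)+1$ via the layer-count bounds, discard the $O(\ell)$ corrections using \thref{lemma:previously-lemma-6}, and then invoke \thref{lemma:lower-bound-on-select-away-greatest-index} for the matching lower bound. The one genuine difference is your reindexing of each cost term as the cumulative pivot $p_{\ell-i-1}$ (with $p_{\ell-1}=n$), so that the total collapses to $\sum_m p_m$. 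The paper instead substitutes $|L_j|\to\alpha^j$ inside the subtracted sum, which for an upper bound on the cost requires the per-layer lower bound $|L_j|\geq\alpha^j$; that inequality is off by $1$ per interior layer (harmless) but can fail badly for the final layer, which need not be full. Your reindexing sidesteps this entirely by bounding cumulative sizes rather than individual layer sizes, and it also eliminates the $\Theta(\ell^2)$ bookkeeping terms. So your route is slightly cleaner and arguably more rigorous at that one step; just make sure to note explicitly that the extra $+n$ contributed by the term $p_{\ell-1}=n$ is absorbed since $n\leq\frac{\alpha\cdot n}{\alpha-1}$.
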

\begin{proof}
  Using a linear time one-dimensional selection, we can see that the
  runtime for selecting away the layer with the greatest index is:
\begin{eqnarray*}
  && r(n) \\ 
  &\in& \Theta\left(\sum_{i=0}^{\ell-1}\left( n - \sum_{j=\ell-i}^{\ell-1} |L_j|\right)\right)\\
  &\subseteq& O\left(\sum_{i=0}^{\ell-1}\left( n - \sum_{j=\ell-i}^{\ell-1} \alpha^j \right)\right)\\
  &=& O\left(n\cdot \ell -\frac{1}{\alpha-1} \cdot \left(\sum_{i=0}^{\ell-1}(\alpha^{\ell}-\alpha^{\ell-i})\right)\right)\\
  &=& O\left(n\cdot \ell -\frac{1}{\alpha-1} \cdot \frac{(\ell-1)\alpha^{\ell+1}-\ell\cdot\alpha^\ell+\alpha}{\alpha-1}\right)\\
  &=& O\left(n\cdot \ell -\frac{1}{\alpha-1} \cdot \frac{(\alpha-1)\cdot\ell\cdot\alpha^{\ell}-\alpha\cdot(\alpha^{\ell}-1)}{\alpha-1}\right)\\
  &=& O\left(n\cdot \ell -\frac{\ell\cdot\alpha^\ell}{\alpha-1} + \frac{\alpha\cdot(\alpha^{\ell}-1)}{(\alpha-1)^2}\right)\\
  &=& O\left(\ell\cdot\left(n  -\frac{\alpha^\ell}{\alpha-1}\right) + \frac{\alpha\cdot(\alpha^{\ell}-1)}{(\alpha-1)^2}\right)\\
  &\subseteq& O\left(\ell\cdot\left(n -\frac{n\cdot(\alpha-1)+1}{\alpha-1}\right) + \frac{\alpha\cdot n\cdot(\alpha-1)}{(\alpha-1)^2}\right) \\
  && \text{by ~\thref{lemma:lower-bound-on-number-layers}}\\
  &=& O\left(\ell\cdot\left(-\frac{1}{\alpha-1}\right) + \frac{\alpha\cdot n}{\alpha-1}\right)\\
  &=& O\left(\frac{\alpha\cdot n}{\alpha-1} - \left(\frac{\ell}{\alpha-1}\right)\right) \\
  &\subseteq& O\left(\frac{\alpha\cdot n}{\alpha-1} - \left(\frac{\log_{\alpha}(n\cdot(\alpha-1))}{\alpha-1} \right)\right) \\
  && \text{by ~\thref{lemma:lower-bound-on-number-layers}}\\
  &\subseteq& O\left(\frac{\alpha\cdot n}{\alpha-1}\right) \text{by ~\thref{lemma:previously-lemma-6}}\\
  && \text{therefore by \thref{lemma:lower-bound-on-select-away-greatest-index};} \\
  && r(n) \in \Theta\left(\frac{\alpha\cdot n}{\alpha-1}\right)
\end{eqnarray*}
\end{proof}

\subsubsection{When iterative selection is optimal}
We shall also assume that $\alpha>1$ as sorting is optimal for
$\alpha=1$. We will prove that this method is optimal for all values of
$\alpha$ at least two, but not for all values of $\alpha$ less than two.

\begin{theorem}\thlabel{thm:when-iterative-selection-is-optimal}
  Iterative selection is optimal for all $\alpha\geq 2$
\end{theorem}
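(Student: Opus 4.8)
The plan is to sandwich the intrinsic complexity of LOHification between a linear lower bound and the linear runtime of iterative selection, exploiting the fact that for $\alpha\geq 2$ the factor $\frac{\alpha}{\alpha-1}$ is bounded by a constant. First I would invoke \thref{thm:tight-bounds-on-select-away-greatest-index} to fix the algorithm's runtime at $\Theta\left(\frac{\alpha n}{\alpha-1}\right)$, and then note that $\frac{\alpha}{\alpha-1}=1+\frac{1}{\alpha-1}$ is monotonically decreasing in $\alpha$, equal to $2$ at $\alpha=2$ and tending to $1$ as $\alpha\to\infty$. Hence $1<\frac{\alpha}{\alpha-1}\leq 2$ for every $\alpha\geq 2$, so iterative selection runs in $\Theta(n)$ with constants that do not depend on $\alpha$ on this range.

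Next I would establish a matching lower bound. Since any LOHification must place all $n$ input values into layers, it must read and write each element at least once, so LOHification is in $\Omega(n)$ for every $\alpha$. Combining this with the upper bound furnished by iterative selection itself gives $\Omega(n)\subseteq\text{LOHification}\subseteq O(n)$ on the range $\alpha\geq 2$, whence LOHification is in $\Theta(n)$ there. Because iterative selection also runs in $\Theta(n)$ on this range, it attains the optimal complexity, which is exactly the claim.

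The main obstacle is a tempting but unsound shortcut: trying to match the algorithm directly against the information-theoretic bound of \thref{thm:full-lower-bound-on-lohify}. Dividing that bound by $n$, its value approaches $\frac{\alpha\log(\alpha)}{\alpha-1}-\log(\alpha-1)$, which tends to $0$ as $\alpha\to\infty$, because very loose LOHs admit so many valid arrangements that few comparisons are forced. Consequently the comparison bound is strictly weaker than $\Omega(n)$ for large $\alpha$ and cannot certify optimality uniformly over $\alpha\geq 2$ on its own; the resolution is to lean on the trivial linear bound, which dominates in exactly the regime where the comparison bound degrades. (For any fixed $\alpha\geq 2$ the comparison bound is itself $\Theta(n)$ and would suffice, but the uniform statement needs the $\Omega(n)$ observation.) Finally, as the boundary $\alpha=2$ hints, the argument collapses once $\frac{\alpha}{\alpha-1}$ is unbounded: for $\alpha=1+\tfrac{1}{n}$, iterative selection costs $\Theta(n^2)$ while sorting achieves $\Theta(n\log(n))$, which is why optimality cannot be extended to all $\alpha<2$.
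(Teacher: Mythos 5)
Your proposal is correct and follows essentially the same route as the paper: a trivial $\Omega(n)$ lower bound (the cost of touching every element) matched against the $O(n)$ upper bound that \thref{thm:tight-bounds-on-select-away-greatest-index} gives when $\frac{\alpha}{\alpha-1}\leq 2$. The only cosmetic difference is that you bound $\frac{\alpha}{\alpha-1}$ directly over all $\alpha\geq 2$, whereas the paper reduces to the single case $\alpha=2$ via a monotonicity-of-work argument; your added remarks about the degradation of the information-theoretic bound are a bonus, not a divergence.
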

\begin{proof}
  LOHification is trivially done in $\Omega(n)$, as that is the cost
  to load the data. As $\alpha$ increases, the number of layers (hence
  the work) can only decrease, thus it suffices to show iterative
  selection is optimal at $\alpha=2$.
\begin{eqnarray*}
  r(n) &\in& O\left(\frac{2\cdot n}{2-1}\right) \text{~\thref{thm:tight-bounds-on-select-away-greatest-index}}\\
  &\in& O(n) \\
  && \text{therefore;} \\
  \text{LOH} &\in& \Theta(n) ~~ \forall~~\alpha\geq 2
\end{eqnarray*}
\end{proof}

\begin{lemma}\thlabel{lemma:alpha-for-which-iterative-selection-is-sub-optimal}
  Iterative selection is sub-optimal for
  $\alpha=\alpha^*=1+\frac{C}{n}$ where $C$ is any constant $>0$.
\end{lemma}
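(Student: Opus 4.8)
The plan is to substitute $\alpha^* = 1 + \frac{C}{n}$ directly into the tight runtime $\Theta\!\left(\frac{\alpha \cdot n}{\alpha-1}\right)$ that \thref{thm:tight-bounds-on-select-away-greatest-index} established for iterative selection, simplify the resulting expression, and then compare it against the runtime achieved by sorting at this same $\alpha^*$, which \thref{thm:optimal-alpha-for-sorting} already showed to be optimal. The crux is that at $\alpha^*$ the factor $\frac{1}{\alpha^*-1}$ grows like $n$, which should inflate the linear-looking bound into a quadratic one.

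First I would evaluate the iterative-selection runtime at $\alpha = \alpha^*$. Since $\alpha^* - 1 = \frac{C}{n}$, the multiplier $\frac{\alpha^*}{\alpha^*-1}$ equals $\frac{1 + C/n}{C/n} = \frac{n+C}{C}$, so that $\frac{\alpha^* \cdot n}{\alpha^*-1} = \frac{n(n+C)}{C} = \frac{n^2}{C} + n$. Because $C$ is a positive constant, this runtime is in $\Theta(n^2)$. I would then recall that \thref{thm:optimal-alpha-for-sorting} proved sorting to be optimal for every $\alpha \le \alpha^*$, with optimal cost $\Theta(n\log(n))$ at $\alpha^*$ itself. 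Since $n^2 \in \omega(n\log(n))$, the $\Theta(n^2)$ cost of iterative selection strictly exceeds the optimal $\Theta(n\log(n))$, which is precisely the claim that iterative selection is sub-optimal at $\alpha^* = 1 + \frac{C}{n}$.

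The argument reduces to a single substitution followed by an asymptotic comparison, so I do not anticipate a genuine obstacle; the only care required is to cite the correct prior results and to observe explicitly that it is the divergence of $\frac{1}{\alpha-1}$ to order $n$ (rather than any change in the number of layers alone) that drives the cost from linear up to quadratic, thereby separating it from the optimal $\Theta(n\log(n))$ bound guaranteed by sorting in this regime.
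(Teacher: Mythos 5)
Your proposal is correct and follows essentially the same route as the paper: substitute $\alpha^*=1+\frac{C}{n}$ into the $\Theta\left(\frac{\alpha\cdot n}{\alpha-1}\right)$ bound from \thref{thm:tight-bounds-on-select-away-greatest-index}, simplify to $\Theta(n^2)$, and observe this lies in $\omega(n\log(n))$, the optimal cost at $\alpha^*$ from \thref{thm:optimal-alpha-for-sorting}. Your explicit citation of the sorting-optimality result to identify the benchmark is a small clarity improvement over the paper, which leaves that comparison implicit.
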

\begin{proof}
\begin{eqnarray*}
  r(n) &\in& \Theta\left(\frac{\alpha^*\cdot n}{\alpha^*-1}\right) \text{~\thref{thm:tight-bounds-on-select-away-greatest-index}}\\
  &=& \Theta\left(\frac{\left(1+\frac{C}{n}\right)\cdot n}{\left(1+\frac{C}{n}\right)-1}\right) \\
  &=& \Theta\left(\frac{n+C}{\left(\frac{C}{n}\right)}\right) \\
  &=& \Theta\left(\frac{n^2+C\cdot n}{C}\right) \\
  &\subseteq& \Theta(n^2) \\
  &\subseteq& \omega(n\cdot\log(n))
\end{eqnarray*}
\end{proof}

\begin{theorem}\thlabel{alpha-interval-for-which-iterative-selection-is-sub-optimal}
  Iterative selection is sub-optimal for $1<\alpha<2$
\end{theorem}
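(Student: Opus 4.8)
The plan is to exhibit, inside the open interval $(1,2)$, a concrete choice of $\alpha$ for which iterative selection provably does worse than a method already known to LOHify, thereby certifying sub-optimality. The yardstick for optimality here is the achievable runtime: since sorting LOHifies any array in $O(n\log(n))$ for every $\alpha\geq 1$, any method whose runtime lies in $\omega(n\log(n))$ cannot be optimal. Thus it suffices to locate a value of $\alpha$ in $(1,2)$ at which the $\Theta\left(\frac{\alpha\cdot n}{\alpha-1}\right)$ cost of iterative selection (\thref{thm:tight-bounds-on-select-away-greatest-index}) grows asymptotically faster than $n\log(n)$.

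First I would recall from \thref{lemma:alpha-for-which-iterative-selection-is-sub-optimal} that at $\alpha^*=1+\frac{C}{n}$ iterative selection runs in $\Theta(n^2)\subseteq\omega(n\log(n))$, so it is sub-optimal at $\alpha^*$. The only remaining point to check is that this witness genuinely lies strictly between $1$ and $2$: since $C>0$ we have $\alpha^*>1$, and $\alpha^*=1+\frac{C}{n}<2$ holds for all $n>C$. Hence for every sufficiently large $n$ there is a value of $\alpha$ in $(1,2)$ at which iterative selection fails to be optimal, which is exactly the claim.

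A subtlety worth making explicit is that $\alpha$ must be allowed to vary with $n$. For any fixed constant $\alpha\in(1,2)$ the ratio $\frac{\alpha}{\alpha-1}$ is itself constant, so iterative selection runs in $\Theta(n)$ and is in fact optimal; the sub-optimality is only exposed by letting $\alpha\to 1$ as $n\to\infty$ while keeping $\alpha<2$. The theorem should therefore be read as asserting the existence of such sequences $\alpha\in(1,2)$, which in turn shows that the optimality region $\alpha\geq 2$ of \thref{thm:when-iterative-selection-is-optimal} is sharp and cannot be enlarged downward.

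The main obstacle is conceptual rather than computational: pinning down the right meaning of ``sub-optimal for $1<\alpha<2$'' and guarding against the trap that constant $\alpha$ in this range is optimal. Once the comparison is correctly made against the $O(n\log(n))$ sorting bound and $\alpha$ is taken as a function of $n$, the inequality $1<1+\frac{C}{n}<2$ together with \thref{lemma:alpha-for-which-iterative-selection-is-sub-optimal} finishes the argument with no further calculation, though one could equally use an explicit in-range witness such as $\alpha=1+\frac{1}{\sqrt{n}}$, which yields $\Theta(n^{3/2})\subseteq\omega(n\log(n))$.
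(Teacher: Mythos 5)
Your argument establishes a strictly weaker statement than the theorem. You exhibit one witness, $\alpha^*=1+\frac{C}{n}$ (or $1+\frac{1}{\sqrt{n}}$), lying in $(1,2)$ at which iterative selection is sub-optimal, and then explicitly reinterpret the theorem as a mere existence claim, even asserting that for any fixed constant $\alpha\in(1,2)$ iterative selection ``is in fact optimal.'' That reading contradicts what the theorem asserts and what the paper proves. The paper's proof covers the \emph{entire} interval: it observes that $f(\alpha)=\frac{\alpha\cdot n}{\alpha-1}$ is continuous and decreasing on $(1,\infty)$, that it is sub-optimal at $\alpha^*$ by \thref{lemma:alpha-for-which-iterative-selection-is-sub-optimal}, and then locates the first $\alpha'$ at which $f$ meets the lower bound of \thref{thm:lower-bound-on-lohify} by solving $\frac{\alpha'}{\alpha'-1}=\log_2\left(\frac{1}{\alpha'-1}\right)+\frac{\alpha'\log_2(\alpha')}{\alpha'-1}$, finding $\alpha'=2$. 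The content of the theorem is precisely that $\frac{\alpha\cdot n}{\alpha-1}$ strictly exceeds the achievable bound $n\log\left(\frac{1}{\alpha-1}\right)+\frac{n\cdot\alpha\log(\alpha)}{\alpha-1}$ everywhere on $(1,2)$, with equality first occurring at $2$; this is what makes the optimality region $\alpha\geq 2$ of \thref{thm:when-iterative-selection-is-optimal} sharp as a statement about the whole interval, not just about one sequence $\alpha(n)\to 1$.

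A second, related problem is your choice of yardstick. You measure sub-optimality against sorting's $O(n\log(n))$, so your criterion only detects failure when iterative selection lands in $\omega(n\log(n))$. But on most of $(1,2)$ the optimal cost is not $n\log(n)$; it is $\Theta\left(n\log\left(\frac{1}{\alpha-1}\right)+\frac{n\cdot\alpha\log(\alpha)}{\alpha-1}\right)$, attained by partitioning on the pivot closest to the center (\thref{thm:optimal-at-any-alpha}), which can be far below $n\log(n)$. For example, at $\alpha=1+\frac{1}{\log(n)}$ iterative selection costs $\Theta(n\log(n))$ while the optimum is $\Theta(n\log\log(n))$: your test declares this point unobjectionable, yet it is sub-optimal. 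To repair the argument you must compare $f(\alpha)$ against the true lower bound as a function of $\alpha$ throughout the interval, which is exactly the monotonicity-plus-crossing-point argument the paper gives.
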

\begin{proof}
For this derivation, we shall look at the runtime of iterative
selection as a function of $\alpha$ defined by
$f(\alpha) = \frac{\alpha\cdot n}{\alpha-1}$. We can see that
$f'(\alpha) = \frac{-n}{(\alpha-1)^2}$ is negative for all
$\alpha > 1$, thus it is decreasing on the interval
$\alpha$ in $(1,\infty)$. Because decreasing $\alpha$ can only increase
the number of layers (hence the runtime), we know the runtime is
sub-optimal for $\alpha\leq\alpha^*$ by
\thref{lemma:alpha-for-which-iterative-selection-is-sub-optimal}.
Because $f(\alpha) = \frac{\alpha\cdot n}{\alpha-1}$ is continuous and
decreasing on the interval $\alpha$ in $(1,\infty)$ and sub-optimal at
$\alpha=\alpha^*$; it is sub-optimal for $\alpha^*\leq\alpha<\alpha'$
where $\alpha'$ is the first value of $\alpha$, greater than 1, for which
$f(\alpha) = \frac{\alpha\cdot n}{\alpha-1}$ is optimal. We can find
$\alpha'$ by solving:
\begin{eqnarray*}
  \frac{\alpha'\cdot n}{\alpha'-1} &=& n\log_2\left(\frac{1}{\alpha'-1}\right)+\frac{n\cdot\alpha'\cdot\log_2(\alpha')}{\alpha'-1}
\end{eqnarray*}
Which can be simplified to:
\begin{eqnarray*}
  \frac{\alpha'}{\alpha'-1} &=& \log_2\left(\frac{1}{\alpha'-1}\right)+\frac{\alpha'\cdot\log_2(\alpha')}{\alpha'-1} 
\end{eqnarray*}
We see that $\alpha'=2$ is our solution. Therefore, iterative selection is sub-optimal for $1<\alpha<2$.
\end{proof}

\subsection{Selecting to divide remaining pivot indices in half}
For this algorithm, we first calculate the pivot indices in $O(n)$. Then,
we perform a linear-time one-dimensional selection
on the layers up to the median pivot. We then recurse on the sub-problems
until the array is LOHified.

\subsubsection{Runtime}
Because one-dimensional selection is in $\Theta(n)$, the cost of every
layer in the recursion is in $\Theta(n)$. Because splitting at the
median pivot creates a balanced-binary recursion tree, the cost of
the algorithm is in $\Theta(n\cdot d)$ where $d$ is the depth of the
recursion tree. Because the number of pivots in each recursive call
is one less than half of the number of pivots in the parent call, we
have $d=\log_2(\ell)$. Hence:
\begin{eqnarray*}
  r(n) &\in& \Theta(n\cdot\log(\ell)) \\
  &=& \Theta(n\cdot\log(\log_{\alpha}(n\cdot(\alpha-1)+1))) \\
  &=& \Theta\left(n\cdot\log\left(\frac{\log(n\cdot(\alpha-1)+1)}{\log(\alpha)}\right)\right)
\end{eqnarray*}

\subsubsection{When selecting to divide remaining pivot indices in half is optimal}
Here we will show that this method is optimal for the values of
$\alpha$ where sorting is optimal, i.e. $1\leq\alpha\leq\alpha^*=1+\frac{C}{n}$
for any constant, $C > 0$. Then, however, we will show that it is not optimal
for some interval between $\alpha^*$ and two.

\begin{lemma}\thlabel{lemma:optimal-alpha-for-which-selecting-to-div-pivot-indices-in-half-is-optimal}
  Selecting to divide remaining pivot indices in half is optimal for $\alpha=\alpha^*=1+\frac{C}{n}$ for any constant, $C > 0$.
\end{lemma}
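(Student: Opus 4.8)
The plan is to evaluate the runtime of this method at $\alpha^* = 1 + \frac{C}{n}$ and show it collapses to $\Theta(n\log(n))$, which is precisely the lower bound for LOHification at $\alpha^*$ established in the proof of \thref{thm:optimal-alpha-for-sorting}. Once the achievable runtime is shown to match that lower bound, optimality at $\alpha^*$ follows immediately. Because decreasing $\alpha$ can only increase the number of layers (and hence the work), it suffices to verify optimality at the single value $\alpha = \alpha^*$.

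First I would take the runtime shown above, $\Theta\left(n\cdot\log\left(\frac{\log(n\cdot(\alpha-1)+1)}{\log(\alpha)}\right)\right)$, and substitute $\alpha = \alpha^*$. The numerator simplifies cleanly, since $n\cdot(\alpha^*-1)+1 = C+1$ makes $\log(n\cdot(\alpha^*-1)+1) = \log(C+1)$ a constant, while the denominator becomes $\log\left(1+\frac{C}{n}\right)$. Equivalently, I could argue in terms of the number of layers directly: by \thref{lemma:upper-bound-on-number-layers} and \thref{lemma:lower-bound-on-number-layers}, at $\alpha^*$ we have $\ell \in \Theta\left(\log_{\alpha^*}(C+1)\right) = \Theta\left(\frac{\log(C+1)}{\log(1+C/n)}\right)$, and the runtime is $\Theta(n\log(\ell))$.

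The heart of the argument is the asymptotic behavior of $\log\left(1+\frac{C}{n}\right)$ as $n$ grows. Using the standard limit $n\cdot\log\left(1+\frac{C}{n}\right)\to C$ (obtainable by L'H{\^o}pital's rule, as used elsewhere in this paper), we get $\log\left(1+\frac{C}{n}\right)\sim\frac{C}{n}$, so $\frac{\log(C+1)}{\log(1+C/n)}\in\Theta(n)$ and thus $\ell\in\Theta(n)$. Substituting back, the runtime becomes $\Theta\left(n\cdot\log(\Theta(n))\right) = \Theta(n\log(n))$, which matches the $\Omega(n\log(n))$ lower bound at $\alpha^*$; hence the method is optimal there. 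The only delicate step is this asymptotic estimate of $\log\left(1+\frac{C}{n}\right)$; the remainder is direct substitution, so I anticipate no substantive obstacle.
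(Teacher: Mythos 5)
Your proposal is correct and follows essentially the same route as the paper: substitute $\alpha^*=1+\frac{C}{n}$ into the runtime $\Theta\left(n\cdot\log\left(\frac{\log(n\cdot(\alpha-1)+1)}{\log(\alpha)}\right)\right)$, observe the numerator collapses to the constant $\log(C+1)$, and use the asymptotic $\log\left(1+\frac{C}{n}\right)\sim\frac{C}{n}$ (the paper packages this as \thref{lemma:lemma-previously-labeled-9}) to conclude the runtime is $\Theta(n\log(n))$, matching the lower bound from \thref{thm:optimal-alpha-for-sorting}. Your version even corrects a small slip in the paper, which writes $\log(C)$ where $\log(C+1)$ is meant.
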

\begin{proof}
\begin{eqnarray*}
  r(n) &\in& \Theta\left(n\cdot\log\left(\frac{\log(n\cdot(\alpha^*-1)+1)}{\log(\alpha^*)}\right)\right) \\
  &=& \Theta\left(n\cdot\log\left(\frac{\log\left(n\cdot\left(\left(1+\frac{C}{n}\right)-1\right)+1\right)}{\log\left(1+\frac{C}{n}\right)}\right)\right)\\
  &=& \Theta\left(n\cdot\log\left(\frac{\log(C))}{\log\left(1+\frac{C}{n}\right)}\right)\right) \\
  &=& \Theta(n\cdot\log(n)) \text{~\thref{lemma:lemma-previously-labeled-9}}
\end{eqnarray*}
\end{proof}

\begin{lemma}\thlabel{lemma:when-select-to-divide-pivots-is-sub-optimal}
  Selecting to divide remaining pivot indices in half is sub-optimal for $\alpha=2$
\end{lemma}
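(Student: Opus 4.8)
The plan is to substitute $\alpha = 2$ directly into the runtime already established for this algorithm and compare the result against the optimal runtime at $\alpha = 2$. First I would recall that the algorithm's runtime was shown to be in $\Theta\left(n\cdot\log\left(\frac{\log(n\cdot(\alpha-1)+1)}{\log(\alpha)}\right)\right)$. Setting $\alpha = 2$, the inner ratio becomes $\frac{\log(n+1)}{\log(2)} = \log_2(n+1)$, so the runtime collapses to $\Theta(n\cdot\log(\log_2(n+1)))$, which is $\Theta(n\log\log n)$.

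Next I would pin down the optimal runtime at $\alpha = 2$. By \thref{thm:when-iterative-selection-is-optimal}, iterative selection LOHifies in $\Theta(n)$ for every $\alpha \geq 2$, and since merely loading the data already costs $\Omega(n)$, LOHification at $\alpha = 2$ is in $\Theta(n)$. Thus the optimal is linear, and the target is to show that this algorithm's runtime strictly exceeds it.

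Finally, to conclude sub-optimality I would compare the two bounds: because $\log\log n \in \omega(1)$, we have $n\log\log n \in \omega(n)$, so the algorithm's runtime asymptotically dominates the linear optimum. Hence the method is sub-optimal at $\alpha = 2$.

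The main obstacle is really a matter of care rather than difficulty: I must correctly simplify the nested-logarithm expression so that the $\log(\alpha)$ in the denominator combines with the inner logarithm to produce a clean base-$2$ logarithm, and then verify that the resulting $\log\log n$ factor is genuinely unbounded (i.e. in $\omega(1)$) rather than a constant that $\Theta$ would absorb. Once that factor is confirmed to grow without bound, the separation from the $\Theta(n)$ optimum is immediate.
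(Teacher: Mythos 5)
Your proposal is correct and follows essentially the same route as the paper: substitute $\alpha=2$ into the established runtime $\Theta\left(n\cdot\log\left(\frac{\log(n\cdot(\alpha-1)+1)}{\log(\alpha)}\right)\right)$, simplify to $\Theta(n\log\log n)$, and observe this is in $\omega(n)$ while the optimum at $\alpha=2$ is $\Theta(n)$. The only difference is that you make explicit the appeal to \thref{thm:when-iterative-selection-is-optimal} to justify that $\Theta(n)$ is optimal there, which the paper leaves implicit.
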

\begin{proof}
\begin{eqnarray*}
  r(n) &\in& \Theta\left(n\cdot\log\left(\frac{\log(n\cdot(2-1)+1)}{\log(2)}\right)\right) \\
  &=& \Theta(n\cdot\log(\log(n+1))) \\
  &\subseteq& \Theta(n\cdot\log(\log(n))) \\
  &\subseteq& \omega(n) 
\end{eqnarray*}
\end{proof}

\begin{theorem}\thlabel{thm:when-selecting-to-divide-remaining-pivot-indices-in-half-is-sub-optimal}
  Selecting to divide remaining pivot indices in half is sub-optimal for some interval in $\alpha^*<\alpha\leq 2$
\end{theorem}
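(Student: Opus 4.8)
The plan is to mirror the structure used for iterative selection in \thref{alpha-interval-for-which-iterative-selection-is-sub-optimal}: treat the runtime as a continuous function of $\alpha$ and use it to promote a single point of sub-optimality into a whole interval. I would write the runtime as $f(\alpha)=n\log\left(\frac{\log(n(\alpha-1)+1)}{\log\alpha}\right)$, which is continuous (indeed monotonically decreasing) on $(1,\infty)$ for each fixed $n$, and compare it against the lower bound $g(\alpha)=n\log\left(\frac{n}{n(\alpha-1)+1}\right)+\frac{n\alpha\log\alpha}{\alpha-1}$ of \thref{thm:full-lower-bound-on-lohify}.

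First I would anchor the right endpoint: by \thref{lemma:when-select-to-divide-pivots-is-sub-optimal}, at $\alpha=2$ we have $f(\alpha)=\Theta(n\log\log n)$ while $g(\alpha)=\Theta(n)$, so the method is sub-optimal there. The core step is to show this is not isolated at $\alpha=2$. I would evaluate both $f$ and $g$ at an arbitrary \emph{constant} $\alpha$ with $1<\alpha\leq 2$. For any such fixed $\alpha$, $n(\alpha-1)+1=\Theta(n)$, hence $\log_\alpha(n(\alpha-1)+1)=\Theta(\log n)$ and $f(\alpha)=\Theta(n\log\log n)$; meanwhile $\frac{n}{n(\alpha-1)+1}\to\frac{1}{\alpha-1}$ and $\frac{n\alpha\log\alpha}{\alpha-1}=\Theta(n)$, so $g(\alpha)=\Theta(n)$. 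Since $n\log\log n\in\omega(n)$, the method is sub-optimal at every constant $\alpha\in(1,2]$.

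The final step reconciles this with the $n$-dependent left endpoint $\alpha^*=1+\frac{C}{n}$ of the target interval. Because $\alpha^*\to 1$ as $n\to\infty$, any fixed $\alpha_0\in(1,2)$ satisfies $\alpha^*<\alpha_0$ for all sufficiently large $n$, so the fixed closed interval $[\alpha_0,2]$ is eventually contained in $(\alpha^*,2]$; by the previous step the method is sub-optimal throughout $[\alpha_0,2]$, which is the required interval.

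The main obstacle is conceptual rather than computational: sub-optimality is an asymptotic-in-$n$ statement, so the naive ``continuity in $\alpha$ at fixed $n$'' reasoning used for iterative selection does not transfer directly, because the boundary between optimal and sub-optimal behavior itself migrates toward $\alpha=1$ as $n$ grows (the method is optimal at $\alpha^*$ precisely because $\alpha^*$ hugs $1$). The step requiring the most care is therefore to fix $\alpha$ as a constant bounded away from $1$ \emph{before} letting $n\to\infty$, which cleanly separates the $\Theta(n\log\log n)$ runtime from the $\Theta(n)$ lower bound, and only then to note that $\alpha^*$ has fallen below any such constant.
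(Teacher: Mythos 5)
Your proof is correct, but it takes a genuinely different route from the paper's. The paper anchors the two endpoints with \thref{lemma:optimal-alpha-for-which-selecting-to-div-pivot-indices-in-half-is-optimal} and \thref{lemma:when-select-to-divide-pivots-is-sub-optimal}, computes $f'(\alpha)$ explicitly, observes that it is negative for large $n$ and $\alpha>1$ (so the runtime only improves as $\alpha$ increases), and concludes that sub-optimality at $\alpha=2$ forces an interval of sub-optimality inside $(\alpha^*,2]$. You instead evaluate both the runtime and the lower bound of \thref{thm:full-lower-bound-on-lohify} at an arbitrary \emph{fixed constant} $\alpha\in(1,2]$, obtaining $\Theta(n\log\log n)$ against $\Theta(n)$, and then note that $\alpha^*=1+\frac{C}{n}$ eventually falls below any such constant. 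Your version buys two things: it avoids the derivative computation entirely, and it establishes the strictly stronger claim that the method is sub-optimal at \emph{every} constant $\alpha$ in $(1,2]$, not merely on some unspecified sub-interval. Your closing observation about why a fixed-$n$ continuity argument does not transfer cleanly --- the optimal/sub-optimal boundary itself migrates toward $1$ as $n$ grows --- is a real subtlety that the paper's monotonicity argument glosses over, and your quantifier order (fix $\alpha$ bounded away from $1$ first, then let $n\to\infty$) handles it correctly. What the paper's approach buys in exchange is parallelism: it reuses the same derivative-based template as the iterative-selection argument in \thref{alpha-interval-for-which-iterative-selection-is-sub-optimal}.
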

\begin{proof}
For this derivation, we shall look at the runtime of dividing the remaining pivot
indices in half as a function of $\alpha$ defined by
$f(\alpha)=n\cdot\log\left(\frac{\log(n\cdot(\alpha-1)+1)}{\log(\alpha)}\right)$.
By \thref{lemma:optimal-alpha-for-which-selecting-to-div-pivot-indices-in-half-is-optimal}
and \thref{lemma:when-select-to-divide-pivots-is-sub-optimal}, $f(\alpha)$ is optimal
at $\alpha^*$ and sub-optimal at $2$. Because
\[f'(\alpha)=
\frac{n\cdot\log(\alpha)\cdot\left(
  \frac{n\cdot\alpha\cdot\log(\alpha)-(n\cdot(\alpha-1)+1)\cdot\log(n\cdot(\alpha-1)+1)}
       {(n\cdot(\alpha-1)+1)\cdot\log^2(\alpha)\cdot\alpha}\right)}
{\log(n\cdot(\alpha-1)+1)}\]
is negative for large $n$ and $\alpha>1$, the algorithm performs better as
$\alpha$ increases. Because it is sub-optimal at $\alpha=2$ there must be an
interval in $(\alpha^*,2]$ where $f(\alpha)$ is sub-optimal.
\end{proof}

\subsection{Partitioning on the pivot closest to the center of the array}
For this implementation of the algorithm, we start by computing the pivots and
then performing a linear-time selection algorithm on the pivot closest to the
true median of the array to partition the array into two parts. We then recurse
on the parts until all layers are generated. In this section, we will describe
the runtime recurrence in detail, and then prove that this method has optimal
performance at any $\alpha$.

\subsubsection{The runtime recurrence}
Let $n_s$ and $n_e$ be the starting and ending indices (respectively) of our (sub)array. 
Let $m(n_s,n_e)$ be the number of pivots between $n_s$ and $n_e$ (exclusive). 
Let $x(n_s,n_e)$ be the index of the pivot closest to the middle of the (sub)array starting at $n_s$ and ending at $n_e$. 
Then the runtime of our algorithm is $r(0,n)$ where \\
\[
  r(n_s,n_e) = \begin{cases}
    0,~~~~ n_s \geq n_e\\
    0,~~~~  m(n_s,n_e) = 0 \\
    n_e-n_s+r(n_s,x(n_s,n_e)-1)+ \\ r(x(n_s,n_e)+1,n_e),~~~~ \text{else}
  \end{cases}
\]

The recurrence for this algorithm is solved by neither the master
theorem\cite{bently:general} nor the more general Akra-Bazzi
method\cite{akra:solution}. Instead, we will bound the runtime by
bounding how far right we go in the recursion tree, $t_{max}$, and
using this to find the deepest layer, $d^*$ for which all branches
have work.  Because performing two selections is in $O(n)$, we will
bound the size of the recursions by half of the parent by selecting on
the pivots on both sides of the true median (if the true median is a
pivot we just pay for it twice). From there, the bound on the runtime
can be computed as $O(d^*\cdot n) +
O\left(\sum^{\log(n)}_{d=d^{*}}\sum^{t_{max}}_{t=1}\frac{n}{2^{d}}\right)$.
This scheme is depicted in Figure~\ref{fig:tree}.

\begin{figure}[H]
  \centering
  \includegraphics[width=.45\textwidth]{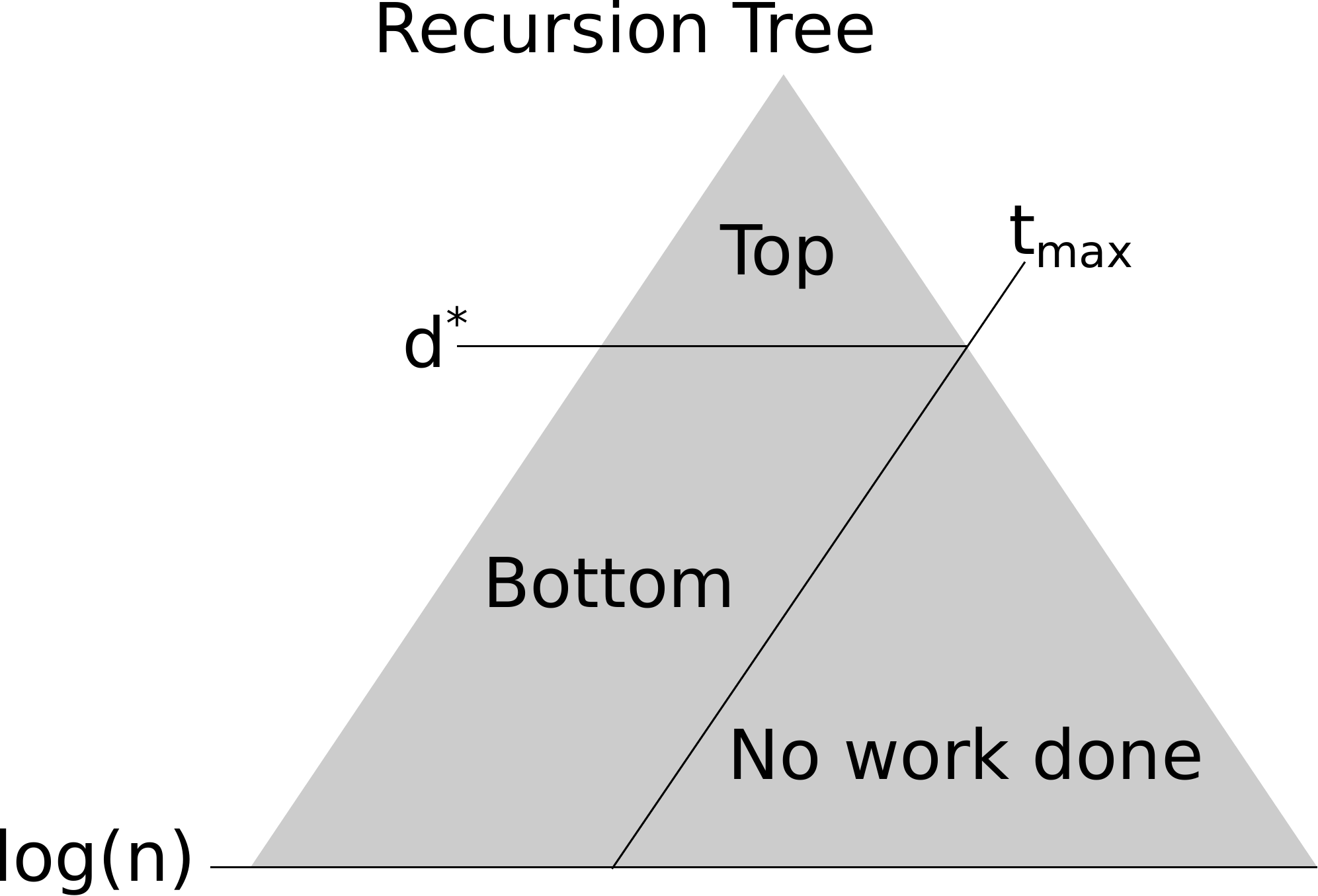}
  \caption{\textbf{The recursion tree for partitioning on the pivot
    closest to the center of the array.}
    The work at ``Top'' is in $O(n\cdot d^*)$ and the work done at
    ``Bottom'' is in $O\left(\sum^{\log(n)}_{d=d^{*}}\sum^{t_{max}}_{t=1}
    \frac{n}{2^{d}}\right)$.
  \label{fig:tree}}
\end{figure}

\subsubsection{Bounds on variables}

For the derivation of the following bounds, we shall assume that
$\alpha > 1$.

\begin{lemma}\thlabel{lemma:bound-on-number-pivots-between-two-points}
  The number of pivots between any two points is 
  $m(n_s,n_e) \leq \log_{\alpha}\left(\frac{n_{e}\cdot(\alpha-1)+1}{(n_{s}-1)\cdot(\alpha-1)+1}\right)$.
\end{lemma}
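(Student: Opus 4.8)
The plan is to treat the pivots falling in the window as a single contiguous block and to control its length by locating its two extreme pivots. Since $p_i=\lceil\sum_{j=0}^{i}\alpha^{j}\rceil=\lceil\frac{\alpha^{i+1}-1}{\alpha-1}\rceil$ is strictly increasing in $i$ for $\alpha>1$, the pivots lying between $n_s$ and $n_e$ form a consecutive run $p_a,p_{a+1},\ldots,p_b$, so that $m(n_s,n_e)=b-a+1$. It then suffices to bound $b-a$ by pinning $p_b$ from above in terms of $n_e$ and $p_a$ from below in terms of $n_s$.

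For the upper pivot I would use $p_b\le n_e$: because $n_e$ is an integer the ceiling may be dropped, giving $\frac{\alpha^{b+1}-1}{\alpha-1}\le n_e$ and hence $\alpha^{b+1}\le n_e(\alpha-1)+1$. This is exactly the algebra already carried out in \thref{lemma:upper-bound-on-number-layers}, applied to the endpoint $n_e$ in place of $n$. For the lower pivot I would use $n_s\le p_a$: since $p_a=\lceil\frac{\alpha^{a+1}-1}{\alpha-1}\rceil\ge n_s$, the real quantity under the ceiling must exceed $n_s-1$, so $\frac{\alpha^{a+1}-1}{\alpha-1}>n_s-1$ and therefore $\alpha^{a+1}>(n_s-1)(\alpha-1)+1$. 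This mirrors the ``additional pivot'' argument of \thref{lemma:lower-bound-on-number-layers}, and it is precisely this step that produces the $n_s-1$ appearing in the statement rather than $n_s$.

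Dividing the two inequalities gives $\frac{\alpha^{b+1}}{\alpha^{a+1}}\le\frac{n_e(\alpha-1)+1}{(n_s-1)(\alpha-1)+1}$, i.e. $\alpha^{\,b-a}\le\frac{n_e(\alpha-1)+1}{(n_s-1)(\alpha-1)+1}$, and taking $\log_\alpha$ of both sides together with the quotient rule $\log_\alpha\frac{u}{v}=\log_\alpha u-\log_\alpha v$ delivers the claimed bound on $m(n_s,n_e)$. The degenerate case $m(n_s,n_e)=0$ (no pivot in the window) needs no separate argument, since the right-hand side is nonnegative whenever $n_e\ge n_s-1$.

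The step I expect to be the main obstacle is the careful bookkeeping of the ceiling functions and of the counting convention at the two endpoints: translating $p_a\ge n_s$ into the strict inequality $\frac{\alpha^{a+1}-1}{\alpha-1}>n_s-1$ (rather than into a bound using $n_s$) is what shifts the denominator to $(n_s-1)(\alpha-1)+1$, and reconciling the block count $b-a+1$ with the ``exclusive'' endpoint accounting so that it is bounded by the stated logarithm is the only delicate point. Once both extreme pivots are pinned down, the remaining division and logarithm are routine.
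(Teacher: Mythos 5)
Your approach is the same as the paper's: pin the extreme pivots of the window against the two endpoints using the closed form $p_i=\lceil\frac{\alpha^{i+1}-1}{\alpha-1}\rceil$, with the $\lceil x\rceil\le x+1$ step at the left endpoint producing the $(n_s-1)$ in the denominator, and then difference the two bounds (you divide the exponential inequalities where the paper subtracts the logarithmic ones, which is the same computation). The intermediate inequalities you derive, $\alpha^{b+1}\le n_e\cdot(\alpha-1)+1$ and $\alpha^{a+1}>(n_s-1)\cdot(\alpha-1)+1$, are both correct.

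The gap is precisely the point you flag and do not close: you set $m(n_s,n_e)=b-a+1$ but only establish $b-a<\log_\alpha\left(\frac{n_e\cdot(\alpha-1)+1}{(n_s-1)\cdot(\alpha-1)+1}\right)$, which gives $m\le\log_\alpha(\cdot)+1$, not the stated bound. The extra $1$ cannot be absorbed by more careful bookkeeping, because the lemma as literally stated is false by exactly that additive constant: for $\alpha=2$ the pivots sit at $1,3,7,15,\ldots$, and the window $n_s=6$, $n_e=16$ contains the two pivots $7$ and $15$, while $\log_2\frac{16\cdot 1+1}{5\cdot 1+1}=\log_2(17/6)\approx 1.50<2$. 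You have in fact reproduced the paper's own slip: the paper writes $m\le p_e-p_s$, but its $p_e$ is the largest pivot index at or below $n_e$ while its $p_s$ is the smallest pivot index at or above $n_s$, so the count of pivots in the window is $p_e-p_s+1$, not $p_e-p_s$. The correct conclusion of your argument (and of the paper's) is $m(n_s,n_e)\le\log_\alpha\left(\frac{n_e\cdot(\alpha-1)+1}{(n_s-1)\cdot(\alpha-1)+1}\right)+1$; since the lemma is only ever used inside asymptotic bounds where the additional $1$ per recursion node is absorbed, nothing downstream breaks, but your proof should state the $+1$ rather than claim the bound as written.
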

\begin{proof}
By our definition, the $i^{th}$ pivot, $p_i$, occurs at $p_{i} = \left\lceil\sum_{j=0}^{i}\alpha^j\right\rceil = \left\lceil\frac{\alpha^{i+1}-1}{\alpha-1}\right\rceil$.
Let $n_s$ be the start of our (sub)array and $n_e$ be the end of our (sub)array. Then the number of pivots, $p_e$, occurring before $n_e$ is bound by the inequality:\\
\begin{eqnarray*}
  n_e &\geq& \left\lceil\frac{\alpha^{p_{e}+1}-1}{\alpha-1}\right\rceil \\
  n_e &\geq& \frac{\alpha^{p_{e}+1}-1}{\alpha-1} \\
  n_{e}\cdot(\alpha-1) &\geq& \alpha^{p_{e}+1}-1 \\
  n_{e}\cdot(\alpha-1)+1 &\geq& \alpha^{p_{e}+1} \\
  \log_{\alpha}(n_{e}\cdot(\alpha-1)+1) &\geq& p_{e}+1 \\
  \log_{\alpha}(n_{e}\cdot(\alpha-1)+1)-1 &\geq& p_{e}
\end{eqnarray*}
Similarly, the number of pivots, $p_s$, occurring before $n_s$ is bound by the inequality:
\begin{eqnarray*}
  n_s &\leq& \left\lceil\frac{\alpha^{p_{s}+1}-1}{\alpha-1}\right\rceil \\
  n_s &\leq& \frac{\alpha^{p_{s}+1}-1}{\alpha-1} + 1 \\
  n_{s}-1 &\leq& \frac{\alpha^{p_{s}+1}-1}{\alpha-1} \\
  (n_{s}-1)\cdot(\alpha-1) &\leq& \alpha^{p_{s}+1}-1 \\
  (n_{s}-1)\cdot(\alpha-1)+1 &\leq& \alpha^{p_{s}+1} \\
  \log_{\alpha}((n_{s}-1)\cdot(\alpha-1)+1) &\leq& p_{s}+1 \\
  \log_{\alpha}((n_{s}-1)\cdot(\alpha-1)+1)-1 &\leq& p_{s}
\end{eqnarray*}
By combining these two inequalities, we can find an upper bound on the number of pivots in the (sub)array, $m(n_s,n_e)$:
\begin{eqnarray*}
  m(n_s,n_e) &\leq& (\log_{\alpha}(n_{e}\cdot(\alpha-1)+1)-1) \\
  && - (\log_{\alpha}((n_{s}-1)\cdot(\alpha-1)+1)-1) \\
  m(n_s,n_e) &\leq& \log_{\alpha}(n_{e}\cdot(\alpha-1)+1) \\
  && - \log_{\alpha}((n_{s}-1)\cdot(\alpha-1)+1) \\
  m(n_s,n_e) &\leq& \log_{\alpha}\left(\frac{n_{e}\cdot(\alpha-1)+1}{(n_{s}-1)\cdot(\alpha-1)+1}\right) 
\end{eqnarray*}
\end{proof}

\subsubsection{A bound on the runtime recurrence}
For the following bounds, we will assume that $\alpha > 1$.
Let $d$ be the depth of our current recursion (indexed at 0) and $t$
be how far right in the tree we are at our current recursion (indexed
at 1). To get an upper bound on the recurrence, we will compute the
cost of selecting for both the first index before the true middle and
the first index after the true middle. We will then treat the true
middle as $x(n_s,n_e)$ for our recursive calls. Under these
restrictions, $n_s=\frac{n\cdot(t-1)}{2^d}$ and
$n_e=\frac{n\cdot t}{2^d}$ for a given $t$ and $d$. Knowing this, we
can calculate bounds for $m(n_s,n_e)$ in terms of $t$ and $d$.
\begin{eqnarray*}
  m(n_s,n_e) &\leq& \log_{\alpha}\left(\frac{n_{e}\cdot(\alpha-1)+1}{(n_{s}-1)\cdot(\alpha-1)+1}\right) \\
  && \text{by ~\thref{lemma:bound-on-number-pivots-between-two-points}}\\
  &\leq& \log_{\alpha}\left(\frac{\frac{n\cdot t}{2^d}\cdot(\alpha-1)+1}{(\frac{n\cdot(t-1)}{2^d}-1)\cdot(\alpha-1)+1}\right) \\
  &\leq& \log_{\alpha}\left(\frac{n\cdot t\cdot(\alpha-1)+2^d}{(n\cdot(t-1)-2^d)\cdot(\alpha-1)+2^d}\right)
\end{eqnarray*}

We can then use this to calculate $t$, in terms of $\alpha$, $n$ and $d$
for which $m(n_s,n_e) < 1$. This will give us a bound on how far right
we go in the recursion tree.

\begin{eqnarray*}
  && \log_{\alpha}\left(\frac{n\cdot t\cdot(\alpha-1)+2^d}{(n\cdot(t-1)-2^d)\cdot(\alpha-1)+2^d}\right) < 1 \\
  && \left(\frac{n\cdot t\cdot(\alpha-1)+2^d}{(n\cdot(t-1)-2^d)\cdot(\alpha-1)+2^d}\right) < \alpha \\
  && n\cdot t\cdot(\alpha-1)+2^d < \alpha\cdot(n\cdot(t-1)-2^d)\cdot(\alpha-1) \\
  && ~~~~~~~~~~~~~~~~~~~~~~~~~~+\alpha\cdot 2^d \\
  && n\cdot t\cdot(\alpha-1) < \alpha\cdot(n\cdot(t-1)-2^d)\cdot(\alpha-1) \\
  && ~~~~~~~~~~~~~~~~~~~~+(\alpha-1)\cdot 2^d \\
  && n\cdot t < \alpha\cdot(n\cdot(t-1)-2^d)+ 2^d \\
  && n\cdot t < \alpha\cdot n\cdot t-\alpha\cdot n-\alpha\cdot 2^d+2^d \\
  && t < \alpha\cdot t-\alpha-\frac{(\alpha-1)\cdot 2^d}{n} \\
  && t-\alpha\cdot t < -\alpha-\frac{(\alpha-1)\cdot 2^d}{n} \\
  && t\cdot(\alpha-1) > \alpha+\frac{(\alpha-1)\cdot 2^d}{n} \\
  && t > \frac{\alpha}{\alpha-1}+\frac{2^d}{n}
\end{eqnarray*}

\indent Because $2^d\leq n$ at any layer of the recursion,
$t_{max}=\frac{\alpha}{\alpha-1}+1$. Using this, we can define
$r^*$, an upper bound on our runtime recurrence where
$r(0,n) \leq r^*(1,0)$ and $r^*(t,d) =$
\[
  \begin{cases}
    0 & t > t_{\max}\\
    0 & 2^d > n\\
    \frac{n}{2^d}+r(2\cdot t-1,d+1)+r(2\cdot t,d+1) & \text{else}.
  \end{cases}
\]

\subsubsection{The runtime of partitioning on the pivot closest to the center of the array}

\begin{theorem}\thlabel{thm:time-to-divide-array-in-half}
  For $\alpha>1$, partitioning on the pivot closest to the center of the array is
  in $O\left(n\log\left(\frac{\alpha}{\alpha-1}\right)\right)$
\end{theorem}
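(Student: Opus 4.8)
The plan is to bound $r(0,n)$ through the upper-bounding recurrence $r^*(t,d)$ already constructed, never touching the original recurrence directly. The key structural fact is that $r^*$ lives on a binary recursion tree indexed by depth $d$ and horizontal position $t$, where each live node does work $\tfrac{n}{2^d}$ and spawns children at $(2t-1,d+1)$ and $(2t,d+1)$, while every branch with $t>t_{\max}=\tfrac{\alpha}{\alpha-1}+1$ is pruned. I would therefore split the tree at the critical depth $d^*$ separating the region where the tree is still complete from the region where pruning has taken effect, matching the ``Top'' and ``Bottom'' of Figure~\ref{fig:tree}, and charge the total as $O(d^*\cdot n)+O\!\left(\sum_{d=d^*}^{\log_2 n}\sum_{t=1}^{t_{\max}}\tfrac{n}{2^d}\right)$.

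Locating $d^*$ is the first concrete step. Since a node at depth $d$ has horizontal index at most $2^d$, the cutoff $t>t_{\max}$ prunes nothing as long as $2^d\le t_{\max}$ and leaves at most $\lfloor t_{\max}\rfloor$ live nodes per level once $2^d>t_{\max}$; hence the tree is complete exactly up to $d^*=\lceil\log_2 t_{\max}\rceil$. On this top region each complete level has $2^d$ nodes each costing $\tfrac{n}{2^d}$, i.e. exactly $n$ per level, so the top contributes $O(d^*\cdot n)=O(n\log_2 t_{\max})$. I would then simplify $\log_2 t_{\max}=\log_2\!\left(\tfrac{\alpha}{\alpha-1}+1\right)$ using $\tfrac{\alpha}{\alpha-1}\le\tfrac{\alpha}{\alpha-1}+1\le 2\,\tfrac{\alpha}{\alpha-1}$, so that $\log_2 t_{\max}=\log_2\!\left(\tfrac{\alpha}{\alpha-1}\right)+O(1)$ and the top cost is $O\!\left(n\log\!\left(\tfrac{\alpha}{\alpha-1}\right)\right)$ up to the additive $O(n)$ absorbed at the end.

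For the bottom region I would use the per-level node count established above: for every $d\ge d^*$ there are at most $t_{\max}$ live nodes, each paying $\tfrac{n}{2^d}$, so level $d$ costs at most $t_{\max}\cdot\tfrac{n}{2^d}$. Summing the resulting geometric series,
\[
\sum_{d=d^*}^{\log_2 n} t_{\max}\cdot\frac{n}{2^d}\ \le\ t_{\max}\cdot n\cdot\frac{2}{2^{d^*}}\ \le\ 2n,
\]
since $2^{d^*}\ge t_{\max}$ by the choice of $d^*$; the bottom is therefore $O(n)$. Adding the two regions gives $O\!\left(n\log\!\left(\tfrac{\alpha}{\alpha-1}\right)\right)+O(n)$, which collapses to the claimed $O\!\left(n\log\!\left(\tfrac{\alpha}{\alpha-1}\right)\right)$.

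The main obstacle is not the two sums, which are routine once set up, but pinning down the tree's shape precisely: I must verify the invariant that at depth $d$ the live indices are exactly $t\in\{1,\dots,\min(2^d,\lfloor t_{\max}\rfloor)\}$, which rests on the index-doubling map $t\mapsto\{2t-1,2t\}$ together with the cutoff derived from \thref{lemma:bound-on-number-pivots-between-two-points}, and on the fact that $2^d\le n$ throughout so that $t_{\max}$ is the constant $\tfrac{\alpha}{\alpha-1}+1$ rather than something depending on $d$. The secondary subtlety is the asymptotic bookkeeping that lets the additive $O(n)$ from the bottom and the $O(1)$ offset in $\log_2 t_{\max}$ be folded into the stated bound.
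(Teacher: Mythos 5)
Your proposal is correct and follows essentially the same route as the paper's proof: split the recursion tree at the critical depth $d^*=\log_2\!\left(\frac{\alpha}{\alpha-1}+1\right)$, charge $O(n\cdot d^*)$ for the complete top levels, and bound the pruned bottom levels by the geometric sum $\sum_{d\ge d^*} t_{\max}\cdot\frac{n}{2^d}\in O(n)$. The only differences are cosmetic --- you make the per-level live-node invariant explicit and bound the bottom sum by $2^{d^*}\ge t_{\max}$ rather than evaluating the geometric series in closed form as the paper does.
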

\begin{proof}
Let $d^*$ be the largest d for which all branches at layer d have work.
Because $t_{max}=\frac{\alpha}{\alpha-1}+1$, $d^*=\log_2(\frac{\alpha}{\alpha-1}+1)$.
This yields:
\begin{eqnarray*}
  && r(n) \\
  &\leq& r^*(n) \\
  &\in& O\left(\sum^{\log(n)}_{d=d^{*}}\sum^{t_{max}}_{t=1}\frac{n}{2^{d}}\right)+O(n\cdot d^*) \\
  &\in& O\left(\sum^{\log(n)}_{d=d^{*}}\frac{\alpha}{\alpha-1}\cdot\frac{n}{2^{d}}\right)+O\left(n\cdot\log\left(\frac{\alpha}{\alpha-1}+1\right)\right) \\
  &\in& O\left(\frac{n\cdot\alpha}{\alpha-1}\cdot\sum^{\log(n)}_{d=d^{*}}\frac{1}{2^{d}}\right)+O\left(n\cdot\log\left(\frac{\alpha}{\alpha-1}\right)\right) \\
  &\in& O\left(\frac{n\cdot\alpha}{\alpha-1}\cdot\left(2^{1-d^*}-2^{-\log(n)}\right)\right) \\
  && +O\left(n\cdot\log\left(\frac{\alpha}{\alpha-1}\right)\right) \\
  &\in& O\left(\frac{n\cdot\alpha}{\alpha-1}\cdot\left(2\cdot\frac{\alpha-1}{2\cdot\alpha-1}-\frac{1}{n}\right)\right) \\
  && +O\left(n\cdot\log\left(\frac{\alpha}{\alpha-1}\right)\right) \\
  &\in& O\left(\frac{2\cdot n\cdot\alpha}{2\cdot\alpha-1}-\frac{\alpha}{\alpha-1}\right)+O\left(n\cdot\log\left(\frac{\alpha}{\alpha-1}\right)\right) \\
  &\in& O(n) + O\left(n\cdot\log\left(\frac{\alpha}{\alpha-1}\right)\right) \\
  &\in& O\left(n\cdot\log\left(\frac{\alpha}{\alpha-1}\right)\right)
\end{eqnarray*}
\end{proof}

\begin{theorem}\thlabel{thm:time-to-divide-array-in-half-sort}
  For $\alpha=1$, partitioning on the pivot closest to the center of the array is optimal.
\end{theorem}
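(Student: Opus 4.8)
The plan is to exploit the fact that at $\alpha=1$ the structure of the problem collapses to ordinary sorting, so that the algorithm becomes recursive selection about the true median and its runtime is governed directly by the master theorem rather than by the $t_{\max}$ argument used for $\alpha>1$.

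First I would characterize the pivots at $\alpha=1$. Since $p_i=\lceil\sum_{j=0}^{i}\alpha^{j}\rceil=i+1$ when $\alpha=1$, a pivot sits at every index, every layer satisfies $|L_i|=1$, and LOHification is exactly comparison sorting. In particular, the pivot closest to the center of any (sub)array is its true median, so the partitioning step selects the median and recurses on the two resulting halves. Next I would set up and solve the recurrence. A linear-time median-of-medians selection partitions a (sub)array of size $n$ in $O(n)$, and because the median splits it into two parts of size roughly $n/2$ that each still contain pivots, the algorithm recurses on both. Accounting for the two-sided selection built into $r^*$ only changes the per-level cost by a constant factor, so
\[
  r(n)\;\le\; c\cdot n + 2\,r\!\left(\tfrac{n}{2}\right).
\]
Applying the master theorem with $a=2$, $b=2$, and $f(n)=\Theta(n)$ gives $n^{\log_b a}=n$, placing us in the balanced case, so $r(n)\in\Theta(n\log n)$ and in particular $r(n)\in O(n\log n)$.

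Finally I would argue optimality. By \thref{thm:full-lower-bound-on-lohify}, LOHification at $\alpha=1$ is in $\Omega(n\log n)$ (this is just the sorting lower bound). Since the algorithm runs in $O(n\log n)$ and this matches the lower bound, the method is optimal at $\alpha=1$.

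The main obstacle is not any hard calculation but recognizing why the earlier runtime result does not transfer: the bound $O\!\left(n\log\!\left(\tfrac{\alpha}{\alpha-1}\right)\right)$ of \thref{thm:time-to-divide-array-in-half} diverges as $\alpha\to1$, and the $t_{\max}=\frac{\alpha}{\alpha-1}+1$ spreading argument is meaningless there. The key realization is that at $\alpha=1$ every level of the recursion tree is full and perfectly balanced, so the ``right-spread'' analysis is replaced by a clean balanced-binary recurrence solvable by the master theorem. The only point requiring a little care is confirming that performing selection on both sides of the median (as $r^*$ does) inflates the per-level work by merely a constant.
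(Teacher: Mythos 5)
Your proposal is correct and takes essentially the same route as the paper: both arguments observe that at $\alpha=1$ every index is a pivot, so the recursion tree is a full balanced binary tree with $O(n)$ work per level and depth $\log_2(n)$, giving $O(n\log(n))$, which matches the sorting lower bound. The paper states this as $r(n)\in O(n\cdot d^*)$ with $d^*=\log_2(n)$ (all branches have work), while you phrase the identical computation as the recurrence $r(n)\le cn+2r(n/2)$ solved by the master theorem.
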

\begin{proof}
Because we are sorting in this case, it suffices to show that this method
is in $O(n\log(n))$. 
Let $d^*$ be the largest d for which all branches at that layer have work.
Because $\alpha=1$, all branches have work.
Thus $d^*=\log_2(n)$.
This yields:
\begin{eqnarray*}
  r(n) &\leq& r^*(n) \\
  &\in& O(n\cdot d^*) \\
  &\in& O(n\log(n))
\end{eqnarray*}
\end{proof}

\begin{lemma}\thlabel{lemma:optimal-alpha-to-divide-array-in-half}
  partitioning on the pivot closest to the center of the array is optimal for
  $1\leq\alpha\leq\alpha^*=1+\frac{C}{n}$ for any constant, $C > 0$.
\end{lemma}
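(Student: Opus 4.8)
The plan is to reduce the statement to a uniform $O(n\log n)$ upper bound. By \thref{thm:optimal-alpha-for-sorting}, for every $\alpha$ with $1 \le \alpha \le \alpha^* = 1 + \frac{C}{n}$ the full LOHification problem lies in $\Theta(n\log n)$; equivalently, the lower bound of \thref{thm:full-lower-bound-on-lohify} collapses to $\Theta(n\log n)$ on this interval. Since partitioning on the pivot closest to the center is itself a valid LOHification algorithm, its runtime is automatically $\Omega(n\log n)$ throughout the interval. Hence optimality will follow as soon as I show that its runtime is also $O(n\log n)$ for every $\alpha \in [1,\alpha^*]$.

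First I would dispose of the endpoint $\alpha = 1$, where the method is exactly sorting and runs in $O(n\log n)$ by \thref{thm:time-to-divide-array-in-half-sort}. For the interior one is tempted simply to substitute $\alpha^*$ into \thref{thm:time-to-divide-array-in-half}, since $\frac{\alpha^*}{\alpha^*-1} = \frac{n+C}{C}$ gives $O\!\left(n\log\frac{n+C}{C}\right) = O(n\log n)$. The difficulty---and the step I expect to be the main obstacle---is that this bound does not transfer to the rest of the interval by monotonicity in the convenient direction: the expression $n\log\frac{\alpha}{\alpha-1}$ is decreasing in $\alpha$, so as $\alpha \to 1^+$ it grows without bound and in fact exceeds $n\log n$ once $\frac{\alpha}{\alpha-1} > n$. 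Thus \thref{thm:time-to-divide-array-in-half} alone, evaluated at $\alpha^*$, does not cover the values of $\alpha$ strictly between $1$ and $\alpha^*$.

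To get around this I would argue a uniform $O(n\log n)$ bound directly from the structure of the upper-bounding recurrence $r^*$. In $r^*$ every recursive call is made on a subarray of at most half the size of its parent (this is exactly why we pay for the two selections straddling the true median), so the recursion depth never exceeds $\log_2 n$. Moreover, the subarrays appearing at any one depth are disjoint pieces of the original array and therefore have sizes summing to at most $n$, so the total selection cost incurred at each depth is $O(n)$. Summing over the at most $\log_2 n + 1$ depths gives $O(n\log n)$, and this argument is valid for every $\alpha \ge 1$ (at $\alpha = 1$ it reduces to the count in \thref{thm:time-to-divide-array-in-half-sort}). This uniform bound is precisely what the $\frac{\alpha}{\alpha-1}$ formula misses near $\alpha = 1$: once $\frac{\alpha}{\alpha-1}$ would force $d^* > \log_2 n$, the genuine depth cap $\log_2 n$ takes over.

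Combining the two directions, for every $\alpha \in [1,\alpha^*]$ the method runs in $\Theta(n\log n)$, which matches the lower bound on this interval; hence partitioning on the pivot closest to the center is optimal for $1 \le \alpha \le \alpha^* = 1 + \frac{C}{n}$.
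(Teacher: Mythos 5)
Your proposal is correct, but it proves the upper bound by a different route than the paper. The paper's proof is a two-line reduction: by \thref{thm:time-to-divide-array-in-half-sort} the method sorts in $O(n\log(n))$, a sorted array is a LOH of every rank, and by \thref{thm:optimal-alpha-for-sorting} the lower bound on the whole interval $1\leq\alpha\leq\alpha^*$ is $\Omega(n\log(n))$; hence the $O(n\log(n))$ achieved by sorting matches the lower bound. You instead establish a uniform $O(n\log(n))$ bound on the recurrence $r^*$ itself for every $\alpha$ in the interval, by observing that the straddling-the-true-median scheme caps the recursion depth at $\log_2(n)$ and that the disjoint subarrays at each depth contribute $O(n)$ work per level. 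Your observation that \thref{thm:time-to-divide-array-in-half} cannot simply be evaluated at $\alpha^*$ and extended by monotonicity --- because $n\log\left(\frac{\alpha}{\alpha-1}\right)$ blows up as $\alpha\rightarrow 1^{+}$ and exceeds $n\log(n)$ once $\frac{\alpha}{\alpha-1}>n$ --- is a genuine and worthwhile point that the paper sidesteps entirely by falling back on the sorting case. What your argument buys is a direct analysis of the algorithm as actually run with the rank-$\alpha$ pivots for $\alpha$ strictly between $1$ and $\alpha^*$, rather than the paper's implicit substitution of the $\alpha=1$ instantiation; what the paper's argument buys is brevity, at the cost of leaving the behavior of the interior ranks to the reader. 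Both arguments are sound and both reach the same conclusion.
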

\begin{proof}
  By \thref{thm:time-to-divide-array-in-half-sort}, this method sorts an array
  in $O(n\log(n))$. Because a sorted array is also a LOH of any order and
  LOHification of order $\alpha^*=1+\frac{C}{n}$ for any constant, $C > 0$, is
  in $\Omega(n\log(n))$ by \thref{thm:optimal-alpha-for-sorting}; this method
  is optimal for $1\leq\alpha\leq\alpha^*$.
\end{proof}

\begin{lemma}\thlabel{lemma:optimal-alpha-to-divide-array-in-half-alpha-geq-2}
  partitioning on the pivot closest to the center of the array is optimal for $\alpha\geq 2$
\end{lemma}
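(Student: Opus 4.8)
The plan is to sandwich the runtime between matching linear bounds, in direct parallel with the proof of \thref{thm:when-iterative-selection-is-optimal}. For the lower bound I will use that LOHification is trivially in $\Omega(n)$, since merely loading the input array already costs linear time and no algorithm can beat this floor. For the upper bound I will invoke \thref{thm:time-to-divide-array-in-half}, which states that for $\alpha>1$ this method runs in $O\left(n\log\left(\frac{\alpha}{\alpha-1}\right)\right)$.

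The crux is a single observation about the factor $\frac{\alpha}{\alpha-1}$: it is strictly decreasing on $(1,\infty)$ and equals $2$ at $\alpha=2$, so for every $\alpha\geq 2$ we have $\frac{\alpha}{\alpha-1}\leq 2$ and therefore $\log_2\left(\frac{\alpha}{\alpha-1}\right)\leq 1$ is a bounded constant. The upper bound then collapses,
\[
O\left(n\log\left(\frac{\alpha}{\alpha-1}\right)\right)\subseteq O(n),\quad \forall\,\alpha\geq 2,
\]
and combining this with the trivial $\Omega(n)$ floor yields $\Theta(n)$ for all $\alpha\geq 2$, which is optimal. Equivalently, since increasing $\alpha$ can only decrease the number of layers and hence the total work, it suffices to verify optimality at the single endpoint $\alpha=2$, where the runtime is $O(n\log 2)=O(n)$.

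I do not expect a genuine obstacle: the whole argument reduces to evaluating the upper bound of \thref{thm:time-to-divide-array-in-half} at $\alpha=2$, where $\log 2$ is a constant, against the linear lower bound. The only point meriting a sentence of care is why checking the endpoint $\alpha=2$ certifies optimality across the entire ray $\alpha\geq 2$, which follows either from the monotonicity of $\frac{\alpha}{\alpha-1}$ or from the ``work is non-increasing in $\alpha$'' remark used earlier in the paper.
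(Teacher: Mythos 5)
Your proposal is correct and matches the paper's argument: the paper likewise cites \thref{thm:time-to-divide-array-in-half}, observes that $\log\left(\frac{\alpha}{\alpha-1}\right)$ is a constant for $\alpha\geq 2$ so the bound collapses to $O(n)$, and concludes $\Theta(n)$ against the trivial linear floor. Your added remarks on monotonicity are sound but not needed beyond what the paper states.
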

\begin{proof}
\begin{eqnarray*}
  r(n) &\in& O\left(n\cdot\log\left(\frac{\alpha}{\alpha-1}\right)\right) \\
  &=& O(n) \\
  &\subseteq& \Theta(n)
\end{eqnarray*}
\end{proof}

\begin{theorem}\thlabel{thm:optimal-at-any-alpha}
  partitioning on the pivot closest to the center of the array is
  optimal for all $\alpha\geq 1$
\end{theorem}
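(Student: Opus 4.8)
The plan is to prove optimality by a case split on $\alpha$, in each case matching the algorithm's runtime against the universal lower bound. Since \thref{thm:full-lower-bound-on-lohify} lower-bounds \emph{every} LOHification algorithm, this method is automatically in $\Omega$ of that bound; so in every case it suffices to show the method's upper bound is also in $O$ of the lower bound, which yields $\Theta$ and hence optimality. I would split $\alpha$ into the boundary value $\alpha=1$, the tail $\alpha\geq 2$, and the middle regime $1<\alpha<2$.

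The two boundary cases are already in hand and can be cited directly. For $\alpha=1$ the method sorts in $O(n\log n)$ by \thref{thm:time-to-divide-array-in-half-sort}, which is optimal for sorting. For $\alpha\geq 2$, \thref{lemma:optimal-alpha-to-divide-array-in-half-alpha-geq-2} already gives $r(n)\in\Theta(n)$, and since merely reading the input costs $\Omega(n)$, this is optimal.

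The substance is the middle regime $1<\alpha<2$. Here the method runs in $O\!\left(n\log\!\left(\frac{\alpha}{\alpha-1}\right)\right)$ by \thref{thm:time-to-divide-array-in-half}, while \thref{thm:lower-bound-on-lohify} gives the lower bound $\Omega\!\left(n\log\!\left(\frac{1}{\alpha-1}\right)+\frac{n\cdot\alpha\cdot\log(\alpha)}{\alpha-1}\right)$. I would rewrite the upper bound using $\log\!\left(\frac{\alpha}{\alpha-1}\right)=\log\!\left(\frac{1}{\alpha-1}\right)+\log(\alpha)$, and then observe that for $\alpha>1$ we have $\frac{\alpha}{\alpha-1}>1$, so $\log(\alpha)\leq \frac{\alpha\log(\alpha)}{\alpha-1}$. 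Consequently
\[
\log\!\left(\frac{\alpha}{\alpha-1}\right)=\log\!\left(\frac{1}{\alpha-1}\right)+\log(\alpha)\leq \log\!\left(\frac{1}{\alpha-1}\right)+\frac{\alpha\log(\alpha)}{\alpha-1},
\]
so the upper-bound expression is dominated term-by-term by the lower-bound expression. On $1<\alpha<2$ every term above is nonnegative (since $\alpha-1<1$ forces $\log\!\left(\frac{1}{\alpha-1}\right)>0$ and $\log(\alpha)>0$), so this pointwise inequality immediately gives $n\log\!\left(\frac{\alpha}{\alpha-1}\right)\in O\!\left(n\log\!\left(\frac{1}{\alpha-1}\right)+\frac{n\cdot\alpha\cdot\log(\alpha)}{\alpha-1}\right)$. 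Combining the three cases then establishes optimality for all $\alpha\geq 1$.

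The main obstacle I anticipate is the sign of $\log\!\left(\frac{1}{\alpha-1}\right)$: once $\alpha>2$ this term turns negative, and the clean term-by-term domination no longer certifies an $O$ relation, because a negative quantity cannot be added freely to both sides of an asymptotic bound. Isolating $\alpha\geq 2$ as its own case, where the bound collapses to $\Theta(n)$, sidesteps this entirely, which is precisely why the case split is the right structuring device. A secondary subtlety is that $\alpha$ may depend on $n$, so the comparison must be read as a statement about functions of $n$; but because the governing inequality is pointwise in $\alpha$ and holds uniformly, the resulting $O$-relation is valid regardless of how $\alpha$ scales with $n$.
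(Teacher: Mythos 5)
Your proposal is correct and follows essentially the same route as the paper: reduce to the middle regime between the already-established boundary cases (sorting-optimal small $\alpha$ and $\Theta(n)$ for $\alpha\geq 2$), then compare the $O\left(n\log\left(\frac{\alpha}{\alpha-1}\right)\right)$ upper bound against the lower bound of \thref{thm:lower-bound-on-lohify} via the decomposition $\log\left(\frac{\alpha}{\alpha-1}\right)=\log\left(\frac{1}{\alpha-1}\right)+\log(\alpha)$. The only difference is cosmetic: the paper absorbs the $n\log(\alpha)$ term into $n\log\left(\frac{1}{\alpha-1}\right)$ and concludes $r(n)\in\Theta\left(n\log\left(\frac{1}{\alpha-1}\right)\right)$, whereas you dominate it term-by-term by $\frac{n\cdot\alpha\cdot\log(\alpha)}{\alpha-1}$, which is if anything slightly more careful as $\alpha$ approaches $2$.
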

\begin{proof}
By \thref{lemma:optimal-alpha-to-divide-array-in-half} and
\thref{lemma:optimal-alpha-to-divide-array-in-half-alpha-geq-2}, it
suffices to show that partitioning on the pivot closest to the center
of the array is optimal for $\alpha^*<\alpha<2$. Suppose
$\alpha^*<\alpha<2$. Then:
\begin{eqnarray*}
  r(n) &\in& \Omega\left(n\log\left(\frac{1}{\alpha-1}\right)+\frac{n\cdot\alpha\cdot\log(\alpha)}{\alpha-1}\right) \\
  && \text{by ~\thref{thm:lower-bound-on-lohify}} \\
&\in& \Omega\left(n\log\left(\frac{1}{\alpha-1}\right)+n\right) \\
&\subseteq& \Omega\left(n\log\left(\frac{1}{\alpha-1}\right)\right)\\
&& \text{By \text{\thref{thm:time-to-divide-array-in-half}}, we have:}\\
r(n) &\in& O\left(n\cdot\log\left(\frac{\alpha}{\alpha-1}\right)\right)\\
&\in& O\left(n\cdot\log\left(\frac{1}{\alpha-1}\right)+n\cdot\log(\alpha)\right)\\
&\in& O\left(n\cdot\log\left(\frac{1}{\alpha-1}\right)\right) \\
&& \text{hence;} \\
r(n) &\in& \Theta\left(n\cdot\log\left(\frac{1}{\alpha-1}\right)\right) 
\end{eqnarray*}
\end{proof}

\subsection{The optimal runtime for the construction of a layer-ordered heap of any rank}

Partitioning on the pivot closest to the center of the array is optimal for
all $\alpha\geq 1$ by \thref{thm:optimal-at-any-alpha}. We can combine this
with \thref{thm:full-lower-bound-on-lohify} to determine that LOHification
is in:
\begin{eqnarray*}
  \Theta\left(n\log\left(\frac{n}{n\cdot(\alpha-1)+1}\right)
  +\frac{n\cdot\alpha\cdot\log(\alpha)}{\alpha-1}\right)
\end{eqnarray*}

\subsection{Quick LOHify}
For this implementation of the algorithm, we partition on a random element,
record the index of this element in an auxiliary array and then recurse
on the left side until the best element is selected. While this method is
probabilistic with a worst case construction in $O(n^2)$ , it performs
well in practice and has a linear expected construction time.

\subsubsection{Expected Runtime of Quick LOHify}
Quick LOHify can be thought of as a Quick-Selection with $k=1$ and a
constant number of operations per recursion for the auxiliary array.
By this, we know the expected runtime to be in $\Theta(n)$. A direct
proof is also provided.

\begin{theorem}
The expected runtime for Quick LOHify is in $\Theta(n)$
\end{theorem}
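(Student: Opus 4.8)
The plan is to reduce the analysis to the well-understood recurrence for randomized selection, exploiting the observation already stated in the text that Quick LOHify executes the same recursive partitioning as Quickselect with $k=1$, augmented only by writing a single pivot index into the auxiliary array at each level of recursion. First I would dispatch the lower bound: any LOHification must at minimum read its input, and the very first partition already costs $\Theta(n)$, so $E[T(n)] \in \Omega(n)$ without any probabilistic argument. The substance of the theorem is therefore the matching $O(n)$ \emph{expected} upper bound.

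For the upper bound I would set up the expected-time recurrence. Partitioning on a uniformly random element costs $cn$ for some constant $c$ (the linear partition plus the $O(1)$ auxiliary-array write), and it places the chosen element at a rank that is uniform over $\{1,2,\ldots,n\}$. Since the algorithm then recurses only on the left part, the size of the remaining subproblem is uniform over $\{0,1,\ldots,n-1\}$. Writing $E[T(n)]$ for the expected runtime on an input of size $n$, this yields
\begin{eqnarray*}
  E[T(n)] &=& cn + \frac{1}{n}\sum_{j=0}^{n-1} E[T(j)].
\end{eqnarray*}

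I would then solve this by the substitution method. Guessing $E[T(n)] \le Kn$ and inserting the hypothesis into the sum gives $\frac{1}{n}\sum_{j=0}^{n-1} Kj = \frac{K(n-1)}{2} \le \frac{Kn}{2}$, so that $E[T(n)] \le cn + \frac{Kn}{2}$; choosing any $K \ge 2c$ closes the induction and establishes $E[T(n)] \in O(n)$. Combined with the $\Omega(n)$ lower bound, this proves $E[T(n)] \in \Theta(n)$.

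The main obstacle, and the step deserving the most care, is justifying that the bookkeeping for the auxiliary array is genuinely $O(1)$ per recursive call, so that it folds into the constant $c$ rather than introducing an extra logarithmic or linear factor, and confirming that restricting the recursion to the left part still leaves the pivot rank uniformly distributed so that the subproblem size is indeed uniform on $\{0,\ldots,n-1\}$. Once this recurrence is correctly identified and its inputs validated, the substitution argument itself is entirely routine, and the $\Theta(n)$ conclusion follows immediately.
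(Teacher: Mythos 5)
Your proof is correct, but it takes a genuinely different route from the paper's. You analyze the expected-cost recurrence $E[T(n)] = cn + \frac{1}{n}\sum_{j=0}^{n-1} E[T(j)]$ (valid because the pivot rank is uniform and the algorithm recurses only on the left part, so the subproblem size is uniform on $\{0,\ldots,n-1\}$) and close it by the substitution method with $K \ge 2c$; the paper instead counts expected comparisons directly via indicator variables, arguing that a pair $x_i, x_j$ with $i<j$ (in sorted order) is compared with probability at most $\frac{2}{j}$ and summing $\sum_{i}\sum_{j>i}\frac{2}{j} = 2(n-1)$. Both are standard analyses of Quickselect with $k=1$ and both are sound here. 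Your recurrence approach makes the $O(1)$ auxiliary-array bookkeeping per call transparently absorbable into the constant $c$ and generalizes easily if the per-call overhead were larger; the paper's pairwise-probability argument yields an explicit constant ($2n-2$ expected comparisons) rather than just an asymptotic bound, at the cost of a slightly more delicate probabilistic claim about which pairs get compared. Your dispatch of the $\Omega(n)$ lower bound via the cost of the first partition is also fine and matches what the paper leaves implicit.
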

\begin{proof}
  The runtime is proportional to the number of comparisons. Suppose $x_i$
  is the $i^{th}$ element in the sorted array and assume without loss of
  generality that $i<j$. We compare $x_i$ and $x_j$ only when one of these
  values is the pivot element. This makes the greatest possible
  probability that these elements are compared $\frac{2}{j}$ as j is the
  minimum range that contains these elements. The expected number of
  comparisons can be found by summing this probability over all pairs of
  elements. This yields:
  \begin{eqnarray*}
    \mathbb{E} &=& \sum_{i=0}^{n-2}\sum_{j=i+1}^{n-1}\frac{2}{j} \\
    &=& 2\cdot\sum_{i=0}^{n-2}\sum_{j=i+1}^{n-1}\frac{1}{j} \\
    &=& 2\cdot \left( \begin{tabular}{lllll}
      1 &+ $\frac{1}{2}$ &+ $\frac{1}{3}$ &+ $\cdots$ &+ $\frac{1}{n-1}$ \\
      &+ $\frac{1}{2}$ &+ $\frac{1}{3}$ &+ $\cdots$ &+ $\frac{1}{n-1}$ \\
      &  &+ $\frac{1}{3}$ &+ $\cdots$ &+ $\frac{1}{n-1}$ \\
      &  &  &+ $\ddots$ &+ $~~\vdots$ \\
      &  &  &  &+ $\frac{1}{n-1}$
      \end{tabular} \right) \\
    &=& 2\cdot(n-1) \\
    &=& 2\cdot n -2\\
    &\in& \Theta(n)
  \end{eqnarray*}
\end{proof}

\subsubsection{Expected $\alpha$ of Quick LOHify}
Unlike other constructions of a LOH, an $\alpha$ is not specified when
performing Quick LOHify nor is it guaranteed to be the same across
different runs. We can, however, determine that the expected value
of $\alpha$ to be in $\Theta(\log(n))$.

\begin{theorem}
The expected $\alpha$ for Quick LOHify is in $\Theta(\log(n))$
\end{theorem}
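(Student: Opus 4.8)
The plan is to observe that the effective rank of a LOH produced by Quick LOHify is pinned down by its two smallest layers, and then to compute the expected size of the second layer by linearity of expectation, exactly in the spirit of the preceding comparison-count proof. The key structural fact is that Quick LOHify recurses leftward until the minimum is selected, so the sorted index $1$ of the minimum is always recorded as a pivot; the minimum therefore always forms a singleton bottom layer and $|L_0|=1$. Since a LOH of rank $\alpha$ has $|L_0|=1$ and $|L_1|=\lceil\alpha\rceil$, the rank read off from the output is $\alpha=\frac{|L_1|}{|L_0|}=|L_1|$, so it suffices to prove $\mathbb{E}[|L_1|]\in\Theta(\log(n))$.

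First I would introduce indicators. Let $x_1<x_2<\cdots<x_n$ be the elements in sorted order, so $x_1$ is the minimum, and write $|L_1|=\sum_{j=2}^{n}\mathbf{1}[x_j\in L_1]$. By linearity, $\mathbb{E}[|L_1|]=\sum_{j=2}^{n}\Pr(x_j\in L_1)$. I claim $x_j\in L_1$ exactly when $x_1$ is the first of $x_1,x_2,\ldots,x_{j-1}$ to be chosen as a pivot: if some $x_i$ with $2\le i\le j-1$ is chosen first, it becomes a recorded boundary lying between $x_1$ and $x_j$, pushing $x_j$ into a higher layer; whereas if $x_1$ is chosen first the recursion immediately terminates to its left, so none of $x_2,\ldots,x_{j-1}$ is ever a boundary and $x_j$ stays in $L_1$.

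Next I would evaluate the probability using the same uniform-choice symmetry behind the $\frac{2}{j}$ bound above. As long as none of $x_1,\ldots,x_{j-1}$ has yet been pivoted, every pivot chosen so far has sorted index at least $j$, and partitioning on such an index recurses into a left subarray still containing all of positions $1,\ldots,j-1$; hence the first of these $j-1$ elements to be selected is necessarily recorded as a pivot, and by symmetry it is $x_1$ with probability exactly $\frac{1}{j-1}$. Therefore $\Pr(x_j\in L_1)=\frac{1}{j-1}$ and
\[
  \mathbb{E}[|L_1|]=\sum_{j=2}^{n}\frac{1}{j-1}=\sum_{m=1}^{n-1}\frac{1}{m}\in\Theta(\log(n)),
\]
which together with $|L_0|=1$ yields $\mathbb{E}[\alpha]\in\Theta(\log(n))$.

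The harmonic sum is routine; the \emph{main obstacle} is the probabilistic claim $\Pr(x_j\in L_1)=\frac{1}{j-1}$. The delicate point is justifying that the active subarray retains every one of $x_1,\ldots,x_{j-1}$ until the first of them is pivoted, so that ``first among these $j-1$ to be chosen'' genuinely coincides with ``first to be recorded as a layer boundary''; this is what lets the clean symmetry $\frac{1}{j-1}$ replace a messier conditional analysis. Once that equivalence is nailed down, linearity of expectation and the telescoping harmonic sum complete the argument.
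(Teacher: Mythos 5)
Your argument is internally sound, but it computes a different statistic than the paper does, and you should be aware the two only happen to agree. The paper takes ``the $\alpha$'' of a Quick-LOHify output to be the ratio of the \emph{last} two layers, $\frac{|L_{\ell-1}|}{|L_{\ell-2}|}=\frac{n-j}{j-i}$ where $j$ and $i$ are the first two pivots chosen, and evaluates the average of this ratio over all $\binom{n}{2}$ pivot pairs via a closed form for $\sum_k (k+1)H_k$. That choice is the natural finite surrogate for the definition in the introduction, where the rank is the limit of $\frac{|L_{i+1}|}{|L_i|}$ as $i\to\infty$, i.e., a property of the \emph{largest} layers. You instead measure the ratio of the two \emph{smallest} layers, reduce it to $\mathbb{E}[|L_1|]$ via the (correct) observation that the minimum is always a singleton $L_0$, and get $\mathbb{E}[|L_1|]=H_{n-1}$ by indicators. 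Your probabilistic core is right: until the first of $x_1,\dots,x_{j-1}$ is chosen as a pivot, the active subarray is a prefix of the sorted order containing all of them, so the first of them to be pivoted is uniform among those $j-1$ elements and $\Pr(x_j\in L_1)=\frac{1}{j-1}$ (the boundary convention at worst shifts this to $\frac{1}{j}$, which does not change the asymptotics). What your route buys is a much shorter, linearity-of-expectation computation with no harmonic-sum identity and no distributional assumption on the pivot pair; what it loses is fidelity to the quantity the theorem is about.

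The one thing I would insist you address: since the realized layer-size ratios of a Quick-LOHify output are not constant across layers, $\frac{|L_1|}{|L_0|}$ and $\frac{|L_{\ell-1}|}{|L_{\ell-2}|}$ are genuinely different random variables. Either justify that the bottom ratio is an acceptable reading of ``the expected $\alpha$,'' or rerun your indicator argument on the top two layers (whose sizes are governed by the first two pivots drawn, so a similar symmetry argument is available). As written, you have proved $\Theta(\log(n))$ growth for a defensible but different notion of the expected rank than the one the paper analyzes.
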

\begin{proof}
  The average $\alpha$, $\alpha '$, can be computed as the average
  ratio of the last two layers. This can be found by dividing the
  sum of all ratios by the number of ways to choose the pivots.
  This yields:  
  \begin{eqnarray*}
    && \alpha ' \\
    &=& \frac{1}{\binom{n}{2}}\cdot\sum_{i=0}^{n-2}\sum_{j=i+1}^{n-1}\frac{n-j}{j-i}\\
    &=& \frac{2}{n^2-n}\cdot\sum_{i=0}^{n-2}\sum_{j=i+1}^{n-1}\frac{n-j}{j-i}\\
    &=& \frac{2}{n^2-n}\cdot\sum_{i=0}^{n-2}\sum_{k=1}^{n-i-1}\frac{n-i-k}{k}\\
    &=& \frac{2}{n^2-n}\cdot\sum_{i=0}^{n-2}\left(\sum_{k=1}^{n-i-1}\frac{n-i}{k}-1 \right)\\
    &=& \frac{2}{n^2-n}\cdot\sum_{i=0}^{n-2}\left(\left(\sum_{k=1}^{n-i-1}\frac{n-i}{k}\right)-(n-i-2)\right) \\
    &=& \frac{2}{n^2-n}\cdot\sum_{i=0}^{n-2}\left((n-i)\cdot\left(\sum_{k=1}^{n-i-1}\frac{1}{k}\right)-n+i+2 \right) \\
    &=& \frac{2}{n^2-n}\cdot\sum_{i=0}^{n-2}((n-i)\cdot H_{n-i-1}-n+i+2) \\
    &=& \frac{2}{n^2-n} \\
    && \cdot\sum_{i=0}^{n-2}(n\cdot H_{n-i-1}-i\cdot H_{n-i-1}-n+i+2) \\
    &=& \frac{2}{n^2-n} \\
    && \cdot\left(\left(n\cdot\sum_{i=0}^{n-2}(H_{n-i-1})\right)-\left(\sum_{i=0}^{n-2}(i\cdot H_{n-i-1})\right) \right. \\
    && \left. -(n^2-2\cdot n)+\left(\sum_{i=0}^{n-2}i\right)+(2 \cdot n-4) \right) \\
    &=& \frac{2}{n^2-n}\cdot\left(\left(n\cdot\sum_{k=1}^{n-1}H_k\right)-\left(\sum_{k=1}^{n-1}(n-k-1)\cdot H_k\right) \right. \\
    && \left. -(n^2-2\cdot n)+\frac{n^2-3\cdot n+2}{2}+(2\cdot n-4) \right) \\
    &=& \frac{2}{n^2-n}\cdot\left(\left(\sum_{k=1}^{n-1}(k+1)\cdot H_k\right) + \frac{-n^2+5n-6}{2}   \right) \dagger \\
    &=& \frac{2}{n^2-n} \\
    && \cdot\left(\frac{(n^2+n)\cdot H_n}{2}-\frac{n^2}{4}-\frac{3\cdot n}{4}+\frac{-n^2+5\cdot n-6}{2} \right)  \\
    &=& \frac{2\cdot n^2\cdot H_n+2\cdot n\cdot H_n-3\cdot n^2+7\cdot n-12}{2\cdot n^2-2\cdot n} \\
    &\in& \Theta(\log(n)) 
    \end{eqnarray*}
  $\dagger$ Simplified with Wolfram Mathematica \tt{Sum[Sum[(k + 1)/i, {i, 1, k}], {k, 1, -1 + n}]}
\end{proof}

\section{Results}
We compare the runtimes of various LOHify algorithms to compute the most
permissive score threshold at which a given false discovery rate
(FDR)\cite{benjamini:controlling} $\tau$ occurs. This is traditionally
accomplished by sorting the scored hypotheses (which are labeled as TP
or FP) best first and then advancing one element at a time, updating
the FDR to the current $FDR = \frac{\#FP}{\#FP+\#TP}$ at the
threshold, finding the worst score at which $FDR \leq \tau$ occurs.

The LOH method behaves similarly, but they compute optimistic bounds
on the FDRs in each layer (if all TPs in the layer come first) and
pessimistic bounds (if all FPs in the layer come first). When these
bounds include $\tau$, the layer is recursed on, until the size of the
list is in $O(1)$.

Table~\ref{table:fdr-runtimes-size} demonstrates the performance
benefit and the influence of $\alpha$ on practical performance.
\begin{table}[H]
\centering
\footnotesize
\textbf{$\alpha=6.0$ (where applicable)}
\begin{tabular}{r|lllllll}
  $n$ & SORT & SLWGI & SDRPIH & PPCCA & QUICK \\
  \hline
      $2^{28}$  & 27.1712 & 3.60989 & 6.64702 & 3.55166 & 1.20981 \\
      $2^{27}$  & 13.4568 & 2.67432 & 2.98285 & 2.74130 & 0.840145 \\
      $2^{26}$  & 6.44227 & 1.03872 & 1.86184 & 1.05260 & 0.596104 \\
      $2^{25}$  & 3.06724 & 0.58973 & 0.890603& 0.58956 & 0.266691
\end{tabular}
\textbf{Size = $2^{28}$} 
\begin{tabular}{r|lllllll}
  $\alpha$ & SLWGI & SDRPIH & PPCCA  \\
  \hline
      1.05 & 29.9283 & 16.8750 & 14.4878 \\
      1.1  & 19.3675 & 14.8553 & 12.9062 \\
      1.5  & 8.92916 & 11.0468 & 8.12265 \\
      2.0  & 8.24106 & 9.24010 & 8.21261 \\
      3.0  & 5.73344 & 7.94349 & 5.60023 \\
      4.0  & 3.83187 & 6.35753 & 3.84014 \\
      6.0  & 3.60989 & 6.64702 & 3.55166 \\
      8.0  & 4.62627 & 6.90307 & 4.5759 \\
\end{tabular} 
\caption{\textbf{Runtimes (seconds) of different LOHification methods
    with various $\alpha$.}  Reported runtimes are averages over 10
  iterations. SORT is sorting, SLWGI is selecting the layer with the
  greatest index, SDRPIH is selecting to divide the remaining pivot
  indices in half, PPCCA is partitioning on the pivot closest to the
  center of the array, and QUICK is Quick-LOHify. Quick-LOHify
  generates its own partition indices, which are not determined by
  $\alpha$.
  \label{table:fdr-runtimes-size}}
\end{table}

\section{Discussion}
Due to the $\Omega(n \log(n))$ bound on comparison-based sorting, ordering
values using only pairwise comparison is generally considered to be an
area for little practical performance benefit; however, LOHs have been
used to replace sorting in applications where sorting is a limiting factor.
Optimal LOHify for any $\alpha$ and the practically fast Quick-LOHify variant are
useful to replace sorting in applications such as finding the most
abundant isotopes of a compound\cite{kreitzberg:fast} (fast in
practice with $1<\alpha\ll 2$) and finding the score at which a desired FDR
threshold occurs (fast in practice with an $\alpha \gg 2$). 

\clearpage

\clearpage

\appendix{\noindent\bf{Appendix}}
\begin{appendices}

\section{Lemmas used in methods}
\begin{lemma}\thlabel{lemma:asymptotic-of-log-of-ceiling}
  $\forall\alpha>1,\lceil\alpha^i\rceil\cdot\log(\lceil\alpha^i\rceil)\sim\alpha^i\cdot\log(\alpha^i)$
\end{lemma}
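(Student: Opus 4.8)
The plan is to show that the ratio of the two quantities tends to $1$ as $i\to\infty$, which is exactly the definition of the asymptotic equivalence $\sim$. Since $\alpha>1$, the substitution $x=\alpha^i$ satisfies $x\to\infty$ as $i\to\infty$, so it suffices to prove the cleaner statement $\lceil x\rceil\cdot\log(\lceil x\rceil)\sim x\cdot\log(x)$ as $x\to\infty$. This reduction lets me work with a continuous variable and avoid carrying the exponent around.

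First I would factor the ratio into two independent pieces,
\[
\frac{\lceil x\rceil\cdot\log(\lceil x\rceil)}{x\cdot\log(x)}
= \frac{\lceil x\rceil}{x}\cdot\frac{\log(\lceil x\rceil)}{\log(x)},
\]
and then handle each factor separately using only the elementary ceiling bound $x\le\lceil x\rceil< x+1$.

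For the first factor, that bound gives $1\le \frac{\lceil x\rceil}{x}<1+\frac{1}{x}$ directly, so the squeeze theorem forces it to $1$ as $x\to\infty$. For the second factor, applying the increasing function $\log$ to the same bound yields $\log(x)\le\log(\lceil x\rceil)<\log(x+1)$, hence $1\le\frac{\log(\lceil x\rceil)}{\log(x)}<\frac{\log(x+1)}{\log(x)}$; writing $\log(x+1)=\log(x)+\log(1+\frac{1}{x})$ and noting $\log(1+\frac{1}{x})\to 0$ shows the upper bound tends to $1$, so the squeeze theorem again delivers convergence to $1$. Since both factors converge to $1$, so does their product, which establishes the claim.

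I do not expect any genuine obstacle here; the whole argument is a pair of squeeze estimates. The only point requiring mild care is that $\log(x)$ must be bounded away from $0$ for the second ratio to be defined and for the bounds to be valid, which holds for all sufficiently large $x$ (equivalently, all sufficiently large $i$) — and since $\sim$ is an asymptotic statement, restricting to large $i$ is harmless.
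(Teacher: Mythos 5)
Your proof is correct. It rests on the same key fact as the paper's proof --- the sandwich $\alpha^i \le \lceil\alpha^i\rceil \le \alpha^i+1$ --- but you execute the limit differently. The paper bounds the whole expression above by $(\alpha^i+1)\cdot\log(\alpha^i+1)$, divides by $\alpha^i\cdot\log(\alpha^i)$, splits into two summands, and evaluates the resulting limit with two applications of L'H{\^o}pital's rule. You instead factor the ratio multiplicatively as $\frac{\lceil x\rceil}{x}\cdot\frac{\log(\lceil x\rceil)}{\log(x)}$ (after the substitution $x=\alpha^i$) and dispatch each factor with an elementary squeeze, using $\log(x+1)=\log(x)+\log(1+\frac{1}{x})$ for the second. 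Your route is more elementary --- no L'H{\^o}pital at all --- and the multiplicative factoring makes it transparent that the two sources of error (the ceiling itself and the log of the ceiling) each vanish independently; the paper's version is more mechanical but stays closer to the single-limit computation style used throughout its appendix. Your closing caveat that $\log(x)$ must be bounded away from $0$ for large $x$ is the right thing to note and is indeed harmless for an asymptotic claim.
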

\begin{proof}
  $\alpha^i\cdot\log(\alpha^i) \leq \lceil\alpha^i\rceil\cdot\log(\lceil\alpha^i\rceil)\leq (\alpha^{i}+1)\cdot\log(\alpha^{i}+1)$
\begin{eqnarray*}
  && \lim\limits_{i\rightarrow\infty} \frac{(\alpha^{i}+1)\cdot\log(\alpha^{i}+1)}{\alpha^i\cdot\log(\alpha^i)} \\
  &=& \lim\limits_{i\rightarrow\infty}  \frac{\alpha^{i}\log(\alpha^{i}+1)}{\alpha^{i}\log(\alpha^i)}+\frac{\log(\alpha^{i}+1)}{\alpha^{i}\log(\alpha^i)}\\
  &=& \lim\limits_{i\rightarrow\infty}  \frac{\log(\alpha^{i}+1)}{\log(\alpha^i)}+\frac{\log(\alpha^{i}+1)}{\alpha^{i}\log(\alpha^i)}\\
  && \text{which, by L'H{\^o}pital's rule;} \\
  &=& \lim\limits_{i\rightarrow\infty}  \frac{\frac{\alpha^{i}\log(\alpha)}{\alpha^{i}+1}}{\log(\alpha)}+\frac{\frac{\alpha^{i}\cdot\log(\alpha)}{\alpha^{i}+1}}{\alpha^{i}\cdot\log(\alpha)\cdot(\log(\alpha^i)+1)}\\
  &=& \lim\limits_{i\rightarrow\infty}  \frac{\alpha^i}{\alpha^{i}+1}+\frac{1}{(\alpha^{i}+1)\cdot(\log(\alpha^i)+1)}\\
  &=& \lim\limits_{i\rightarrow\infty}  1-\frac{1}{\alpha^{i}+1}\\
  &=&  1
\end{eqnarray*}
\end{proof}

\begin{lemma}\thlabel{lemma:num-layers-in-little-oh-n}
  $\forall \alpha > 1, \frac{\log(n\cdot(\alpha-1)+1)}{\alpha-1}\in o(n)$
\end{lemma}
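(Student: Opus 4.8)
The plan is to establish the claim directly from the definition of $o(n)$, by computing $\lim_{n\to\infty}\frac{\log(n\cdot(\alpha-1)+1)}{(\alpha-1)\cdot n}$ and verifying that it equals $0$. Because $\alpha>1$ is held fixed while $n\to\infty$, the quantity $\alpha-1$ is simply a positive constant, so the only real content of the lemma is the standard fact that the logarithm grows more slowly than any linear function of $n$.

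First I would rewrite the target ratio by absorbing the constant $\alpha-1$ into the denominator, reducing the problem to evaluating $\lim_{n\to\infty}\frac{\log(n\cdot(\alpha-1)+1)}{(\alpha-1)\cdot n}$. Both numerator and denominator tend to infinity, so this is an indeterminate form of type $\frac{\infty}{\infty}$, and L'H\^opital's rule applies, mirroring the technique already used in \thref{lemma:asymptotic-number-of-layers}.

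Second I would differentiate numerator and denominator with respect to $n$: the derivative of $\log(n\cdot(\alpha-1)+1)$ is $\frac{\alpha-1}{n\cdot(\alpha-1)+1}$, while the derivative of $(\alpha-1)\cdot n$ is $\alpha-1$. The resulting quotient simplifies to $\frac{1}{n\cdot(\alpha-1)+1}$, which tends to $0$ as $n\to\infty$ for any fixed $\alpha>1$. This shows the limit is $0$, and hence $\frac{\log(n\cdot(\alpha-1)+1)}{\alpha-1}\in o(n)$.

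There is no genuine obstacle here, as the lemma is a routine asymptotic bound. The only point requiring care is to keep in mind that $\alpha$ is a constant with respect to the $o(\cdot)$ notation, so that the factor $\alpha-1$ neither vanishes nor diverges and may be treated as a fixed positive multiplier throughout the calculation.
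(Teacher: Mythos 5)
Your proposal is correct and follows essentially the same route as the paper: both rewrite the ratio as $\frac{\log(n\cdot(\alpha-1)+1)}{(\alpha-1)\cdot n}$, apply L'H\^opital's rule once, and obtain $\frac{1}{n\cdot(\alpha-1)+1}\to 0$. No differences worth noting.
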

\begin{proof}
\begin{eqnarray*}
  && \lim\limits_{n\rightarrow\infty} \frac{\frac{\log(n\cdot(\alpha-1)+1)}{\alpha-1}}{n} \\
  &=& \lim\limits_{n\rightarrow\infty} \frac{\log(n\cdot(\alpha-1)+1)}{n\cdot(\alpha-1)} \\
  &=& \lim\limits_{n\rightarrow\infty} \frac{\alpha-1}{(\alpha-1)\cdot(n\cdot(\alpha-1)+1)} \text{~by L'H{\^o}pital's rule}\\
  &=& \lim\limits_{n\rightarrow\infty} \frac{1}{n\cdot(\alpha-1)+1} \\
  &=& 0
\end{eqnarray*}
\end{proof}
\begin{lemma}\thlabel{lemma:asymptotic-bound}
  $\forall \alpha > 1, n\log(\frac{n}{n\cdot(\alpha-1)+1}) \sim n\log(\frac{1}{\alpha-1})$
\end{lemma}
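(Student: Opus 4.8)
The plan is to establish the asymptotic equivalence directly from the definition, i.e.\ by showing that the ratio of the two quantities tends to $1$ as $n \to \infty$ with $\alpha > 1$ held fixed. First I would cancel the common factor of $n$ that appears in both expressions, reducing the problem to evaluating $\lim_{n\to\infty} \frac{\log\left(\frac{n}{n\cdot(\alpha-1)+1}\right)}{\log\left(\frac{1}{\alpha-1}\right)}$. Crucially, the denominator $\log\left(\frac{1}{\alpha-1}\right)$ carries no dependence on $n$, so it may be pulled out of the limit, and only the numerator remains to be analyzed.

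The key step is then to evaluate the inner limit $\lim_{n\to\infty} \frac{n}{n\cdot(\alpha-1)+1}$. Dividing numerator and denominator by $n$ gives $\frac{1}{(\alpha-1) + \frac1n}$, which tends to $\frac{1}{\alpha-1}$. Since the logarithm is continuous on the positive reals, the numerator $\log\left(\frac{n}{n\cdot(\alpha-1)+1}\right)$ converges to $\log\left(\frac{1}{\alpha-1}\right)$, matching the constant denominator exactly and forcing the entire ratio to $1$. Unlike several of the surrounding lemmas, I do not expect to need L'H\^opital's rule here, since no genuine indeterminate form survives once the factors of $n$ cancel.

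The one place that demands care --- and the main obstacle --- is the degenerate case $\alpha = 2$, where $\frac{1}{\alpha-1} = 1$ and hence $\log\left(\frac{1}{\alpha-1}\right) = 0$; the ratio characterization of $\sim$ is then ill-posed because the reference quantity vanishes identically. I would address this by observing that, for every fixed $\alpha > 1$, the difference $n\log\left(\frac{n}{n\cdot(\alpha-1)+1}\right) - n\log\left(\frac{1}{\alpha-1}\right) = -n\log\left(1 + \frac{1}{n\cdot(\alpha-1)}\right)$ remains bounded and in fact tends to $-\frac{1}{\alpha-1}$, so the two expressions agree up to an additive $O(1)$ term. This is precisely what is needed at the point where \thref{lemma:asymptotic-bound} is invoked inside \thref{thm:lower-bound-on-lohify}: the constant discrepancy is dominated by the linear term $\frac{n\cdot\alpha\log(\alpha)}{\alpha-1}$, so substituting one expression for the other leaves the $\Omega$-class unchanged even at $\alpha = 2$.
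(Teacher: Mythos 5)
Your argument is essentially the paper's own proof: cancel the common factor of $n$, pull the constant $\log\left(\frac{1}{\alpha-1}\right)$ out of the limit, and move the limit inside the logarithm by continuity, with the inner limit $\lim_{n\to\infty}\frac{n}{n\cdot(\alpha-1)+1}=\frac{1}{\alpha-1}$ evaluated elementarily where the paper cites L'H{\^o}pital's rule --- an immaterial difference. Your further observation that the ratio definition of $\sim$ degenerates at $\alpha=2$, where $\log\left(\frac{1}{\alpha-1}\right)=0$, identifies a real gap in the paper's proof (which silently divides by this quantity), and your repair via the bounded difference $-n\log\left(1+\frac{1}{n\cdot(\alpha-1)}\right)\rightarrow-\frac{1}{\alpha-1}$ is correct and suffices for the lemma's use in \thref{thm:lower-bound-on-lohify}.
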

\begin{proof}
\begin{eqnarray*}
  && \lim\limits_{n\rightarrow\infty} \frac{n\log\left(\frac{n}{n\cdot(\alpha-1)+1}\right)}{n\log\left(\frac{1}{\alpha-1}\right)} \\
  &=& \lim\limits_{n\rightarrow\infty} \frac{\log\left(\frac{n}{n\cdot(\alpha-1)+1}\right)}{\log\left(\frac{1}{\alpha-1}\right)} \\
  &=& \frac{1}{\log(\frac{1}{\alpha-1})} \cdot \left( \lim\limits_{n\rightarrow\infty} \log\left(\frac{n}{n\cdot(\alpha-1)+1}\right)\right) \\
  &=& \frac{1}{\log(\frac{1}{\alpha-1})} \cdot \left(\log\left( \lim\limits_{n\rightarrow\infty} \frac{n}{n\cdot(\alpha-1)+1}\right)\right) \\
  &=& \frac{1}{\log(\frac{1}{\alpha-1})} \cdot \log(\frac{1}{\alpha-1}) \text{~by L'H{\^o}pital's rule} \\
  &=& 1
\end{eqnarray*}
\end{proof}


\begin{lemma}\thlabel{lemma:optimal-alpha-for-sorting}
  For any constant, $C > 0$, $(n^2+n)\cdot\log(1+\frac{C}{n}) \in o(n\cdot\log(n))$
\end{lemma}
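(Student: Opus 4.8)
The plan is to prove the little-$o$ claim directly from its definition by computing
$\lim\limits_{n\rightarrow\infty} \frac{(n^2+n)\cdot\log(1+\frac{C}{n})}{n\cdot\log(n)}$
and showing it equals $0$. First I would cancel a factor of $n$ from the numerator and denominator, reducing the ratio to $\frac{(n+1)\cdot\log(1+\frac{C}{n})}{\log(n)}$. Since the denominator $\log(n)$ diverges to $\infty$, it then suffices to show that the numerator converges to a finite constant.

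Next I would analyze the numerator $(n+1)\cdot\log(1+\frac{C}{n})$. The key step is the standard asymptotic $\log(1+x)\sim x$ as $x\rightarrow 0$; applied with $x=\frac{C}{n}\rightarrow 0$ it gives $\log(1+\frac{C}{n})\sim\frac{C}{n}$. If a self-contained derivation is preferred (in the style of the paper's other lemmas), one can instead evaluate $\lim\limits_{n\rightarrow\infty}\frac{\log(1+\frac{C}{n})}{\frac{C}{n}}=1$ by L'H{\^o}pital's rule. Either way, $(n+1)\cdot\log(1+\frac{C}{n})\sim(n+1)\cdot\frac{C}{n}=C\left(1+\frac{1}{n}\right)\rightarrow C$, a finite constant.

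Finally, because the numerator tends to the finite constant $C$ while the denominator $\log(n)$ grows without bound, the entire ratio tends to $0$, which is exactly the statement that $(n^2+n)\cdot\log(1+\frac{C}{n})\in o(n\cdot\log(n))$. The one point requiring care, rather than any genuine difficulty, is the base of the logarithm: for any fixed base, changing it merely rescales both the $\log(1+\frac{C}{n})$ factor and the $\log(n)$ denominator by constants, so it cannot affect membership in $o(n\cdot\log(n))$. I expect this bookkeeping to be the main thing to state cleanly, while the limit itself is routine.
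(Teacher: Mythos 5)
Your proposal is correct and follows essentially the same route as the paper: both reduce the ratio to $\frac{(n+1)\cdot\log(1+\frac{C}{n})}{\log(n)}$ and show the limit is $0$. The only difference is in execution — you observe directly that the numerator tends to the constant $C$ via $\log(1+x)\sim x$, while the paper reaches the same conclusion through several successive applications of L'H{\^o}pital's rule; your version is a bit cleaner but not a different argument.
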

\begin{proof}
\begin{eqnarray*}
  &&  \lim\limits_{n\rightarrow\infty} \frac{(n^2+n)\cdot\log(1+\frac{C}{n})}{n\cdot\log(n)} \\
  &=& \lim\limits_{n\rightarrow\infty} \frac{(n+1)\cdot\log(1+\frac{C}{n})}{\log(n)} \\
  &=& \lim\limits_{n\rightarrow\infty} \frac{n\cdot\log(1+\frac{C}{n})}{\log(n)} + \frac{\log(1+\frac{1}{n})}{\log(n)}\\
  &=& \lim\limits_{n\rightarrow\infty} \frac{n\cdot\log(1+\frac{C}{n})}{\log(n)} \\
  &=& \lim\limits_{n\rightarrow\infty} \frac{\log(1+\frac{C}{n})-\frac{C}{n+C}}{(\frac{C}{n})} \text{~by L'H{\^o}pital's rule}\\
  &=& \lim\limits_{n\rightarrow\infty} \frac{\log(1+\frac{C}{n})}{(\frac{C}{n})}-\frac{n}{n+C} \\
  &=& \lim\limits_{n\rightarrow\infty} \frac{-(\frac{C}{n^2+C\cdot n})}{-(\frac{C}{n^2})} - 1 \text{~by L'H{\^o}pital's rule} \\
  &=& \lim\limits_{n\rightarrow\infty} \frac{n^2}{n^2+C\cdot n} - 1 \\
  &=& \lim\limits_{n\rightarrow\infty} \frac{n}{n+C} - 1 \text{~by L'H{\^o}pital's rule}\\
  &=& 1-1 \text{~by L'H{\^o}pital's rule} \\
  &=& 0
\end{eqnarray*}
\end{proof}


\begin{lemma}\thlabel{lemma:previously-lemma-5}
$\forall\alpha > 1, 
\left(\frac{(\log_{\alpha}(n\cdot(\alpha-1)+1))^{2} - \log_{\alpha}(n\cdot(\alpha-1))}{2} \right)$ $\in o\left(\frac{\alpha\cdot n}{\alpha-1}\right)$
\end{lemma}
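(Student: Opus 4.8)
The plan is to show directly that the ratio of the two quantities tends to zero as $n\to\infty$, which is exactly the definition of the little-oh containment. First I would rewrite both base-$\alpha$ logarithms using the change-of-base identity $\log_\alpha(x)=\frac{\log(x)}{\log(\alpha)}$, so that, for a fixed $\alpha>1$, each occurrence of $\log_\alpha$ becomes a constant multiple of a natural logarithm. This makes transparent that the numerator is dominated by the squared term $(\log_\alpha(n\cdot(\alpha-1)+1))^2$, which is $\Theta((\log n)^2)$, while the subtracted term $\log_\alpha(n\cdot(\alpha-1))$ is only $\Theta(\log n)$ and is therefore asymptotically negligible relative to the square.

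Next I would form the quotient
\[
\frac{\frac{1}{2}\left((\log_\alpha(n\cdot(\alpha-1)+1))^2-\log_\alpha(n\cdot(\alpha-1))\right)}{\frac{\alpha\cdot n}{\alpha-1}}
\]
and observe that, since $\alpha$ is held constant, it reduces to a constant multiple of $\frac{(\log n)^2}{n}$ together with a lower-order $\frac{\log n}{n}$ contribution. The heart of the argument is then to evaluate $\lim_{n\to\infty}\frac{(\log n)^2}{n}=0$, which I would do by two applications of L'H\^opital's rule (matching the style of the other appendix lemmas): the first differentiation turns $(\log n)^2$ over $n$ into $\frac{2\log n}{n}$, and the second reduces the surviving $\log n$ to $\frac{1}{n}$, giving a limit of zero. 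The additive $\frac{\log n}{n}$ term vanishes by an even simpler single application, so the full limit is zero.

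The main obstacle is bookkeeping rather than anything conceptual: I must carry the $\alpha$-dependent constants and the additive $+1$ inside the logarithm through the differentiations without letting them obscure the essential point, namely that the numerator is polylogarithmic in $n$ while the denominator is linear in $n$. Once the squared-log limit is established and the linear-log contribution handled identically, summing the two vanishing limits shows that the entire ratio tends to zero, establishing the desired little-oh relation for every fixed $\alpha>1$.
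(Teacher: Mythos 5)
Your proposal is correct and follows essentially the same route as the paper's proof: both form the quotient and show it tends to zero by repeated application of L'H\^opital's rule, with the paper differentiating the full expression directly while you first strip off the $\alpha$-dependent constants to reduce everything to $\lim_{n\to\infty}\frac{(\log n)^2}{n}=0$ and $\lim_{n\to\infty}\frac{\log n}{n}=0$. The difference is purely one of bookkeeping, not of substance.
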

\begin{proof}
\begin{eqnarray*}
  && \lim\limits_{n\rightarrow\infty} \frac{\left(\frac{(\log_{\alpha}(n\cdot(\alpha-1)+1))^{2} - \log_{\alpha}(n\cdot(\alpha-1))}{2}\right)}{\left(\frac{\alpha\cdot n}{\alpha-1}\right)} \\
  &=& \lim\limits_{n\rightarrow\infty} \frac{\frac{\alpha-1}{(\log(\alpha))^2}\cdot(\log(n\cdot(\alpha-1)+1))^{2}}{2\cdot\alpha\cdot n} \\
  && - \frac{\frac{\alpha-1}{\log(\alpha)}\cdot\log(n\cdot(\alpha-1)+1)}{2\cdot\alpha\cdot n} \\
  &=& \lim\limits_{n\rightarrow\infty} \frac{\frac{2\cdot(\alpha-1)^2}{(\log(\alpha))^2}\cdot\frac{\log(n\cdot(\alpha-1)+1)}{n\cdot(\alpha-1)+1}-\xcancel{\frac{(\alpha-1)^2}{\log(\alpha)}\cdot\frac{1}{n\cdot(\alpha-1)+1}}}{2\cdot\alpha} \\
  && \text{~by L'H{\^o}pital's rule}\\
  &=& \lim\limits_{n\rightarrow\infty} \frac{(\alpha-1)^2}{\alpha\cdot(\log(\alpha))^2} \cdot \frac{\log(n\cdot(\alpha-1)+1)}{n\cdot(\alpha-1)+1}\\
  &=& \frac{(\alpha-1)^2}{\alpha\cdot(\log(\alpha))^2} \cdot \left(\lim\limits_{n\rightarrow\infty}  \frac{\log(n\cdot(\alpha-1)+1)}{n\cdot(\alpha-1)+1}\right)\\
  &=& \frac{(\alpha-1)^2}{\alpha\cdot(\log(\alpha))^2} \cdot \left(\lim\limits_{n\rightarrow\infty}   \frac{1}{n\cdot(\alpha-1)+1}\right) \\
  && \text{~by L'H{\^o}pital's rule} \\
  &=& 0
\end{eqnarray*}
\end{proof}

\begin{lemma}\thlabel{lemma:previously-lemma-6}
  $\forall\alpha > 1, \left(\frac{\log_{\alpha}(n\cdot(\alpha-1))}{\alpha-1} \right) \in o\left(\frac{\alpha\cdot n}{\alpha-1}\right)$
\end{lemma}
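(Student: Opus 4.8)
The plan is to establish the little-oh relationship directly from the limit definition, by showing that the ratio of the two functions tends to zero as $n\to\infty$, exactly as was done for the companion bounds \thref{lemma:num-layers-in-little-oh-n} and \thref{lemma:previously-lemma-5}. That is, I would form
\[
\lim_{n\to\infty}\frac{\frac{\log_{\alpha}(n\cdot(\alpha-1))}{\alpha-1}}{\frac{\alpha\cdot n}{\alpha-1}}
\]
and argue that it equals $0$, which is precisely the condition for membership in the stated little-oh class.

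First I would cancel the common factor $\frac{1}{\alpha-1}$ appearing in the numerator and the denominator, reducing the quotient to $\frac{\log_{\alpha}(n\cdot(\alpha-1))}{\alpha\cdot n}$. Next I would rewrite the base-$\alpha$ logarithm in terms of the natural logarithm via $\log_{\alpha}(x)=\frac{\log(x)}{\log(\alpha)}$ and pull the constant $\frac{1}{\alpha\cdot\log(\alpha)}$ — finite and positive because $\alpha>1$ is fixed while $n$ grows — outside the limit, leaving $\frac{1}{\alpha\cdot\log(\alpha)}\cdot\lim_{n\to\infty}\frac{\log(n\cdot(\alpha-1))}{n}$.

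The remaining limit is the standard logarithm-over-linear indeterminate form $\frac{\infty}{\infty}$, which I would resolve with L'H{\^o}pital's rule: differentiating the top and the bottom with respect to $n$ gives $\frac{(\alpha-1)/(n\cdot(\alpha-1))}{1}=\frac{1}{n}$, which tends to $0$. Multiplying by the bounded constant preserves the zero, so the full ratio vanishes and the claim follows.

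I expect no genuine obstacle here; the only points requiring care are bookkeeping ones, namely ensuring that $\alpha$ is treated as a constant (so that $\frac{1}{\alpha\cdot\log(\alpha)}$ may be extracted from the limit) and that the hypothesis $\alpha>1$ guarantees $\log(\alpha)>0$, so the extracted factor is well-defined and finite. This mirrors the structure of the preceding little-oh lemmas in the appendix, so I would present it in the same concise, limit-driven style.
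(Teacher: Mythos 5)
Your proposal is correct and takes essentially the same route as the paper's own proof: cancel the common $\frac{1}{\alpha-1}$ factor, rewrite $\log_{\alpha}$ in terms of the natural logarithm, pull out the constant $\frac{1}{\alpha\cdot\log(\alpha)}$, and resolve the remaining $\frac{\log(n\cdot(\alpha-1))}{n}$ limit by L'H{\^o}pital's rule. (Your derivative computation $\frac{1}{n}$ is in fact cleaner than the paper's, which writes the post-L'H{\^o}pital expression as $\frac{\alpha}{n\cdot(\alpha-1)}$; both tend to $0$.)
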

\begin{proof}
\begin{eqnarray*}
  && \lim\limits_{n\rightarrow\infty} \frac{\left(\frac{\log_{\alpha}(n\cdot(\alpha-1))}{\alpha-1} \right)}{\left(\frac{\alpha\cdot n}{\alpha-1}\right)} \\
  &=& \lim\limits_{n\rightarrow\infty} \frac{\log_{\alpha}(n\cdot(\alpha-1))}{\alpha\cdot n} \\
  &=& \lim\limits_{n\rightarrow\infty} \frac{1}{\alpha\cdot\log(\alpha)}\cdot\frac{\log(n\cdot(\alpha-1))}{n} \\
  &=& \frac{1}{\alpha\cdot\log(\alpha)}\cdot\left(\lim\limits_{n\rightarrow\infty} \frac{\log(n\cdot(\alpha-1))}{n}\right) \\
  &=& \frac{1}{\alpha\cdot\log(\alpha)}\cdot\left(\lim\limits_{n\rightarrow\infty} \frac{\alpha}{n\cdot(\alpha-1)}\right) \\
  && \text{~by L'H{\^o}pital's rule}\\
  &=& 0
\end{eqnarray*}
\end{proof}


\begin{lemma}\thlabel{lemma:lemma-previously-labeled-9}
  $n\cdot\log\left(\frac{\log(C)}{\log\left(1+\frac{C}{n}\right)}\right)
  \in\Theta(n\cdot\log(n))$
  \end{lemma}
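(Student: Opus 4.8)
The plan is to prove the two-sided bound by computing the limit of the ratio of the stated expression to $n\log(n)$ and showing it is a positive, finite constant. Since $C$ is a fixed constant (so that $\log(C)$ is a nonzero constant and the inner ratio is positive for large $n$, as in the application where $C>1$), I would first cancel the common factor of $n$, reducing the problem to evaluating
\[
\lim_{n\to\infty}\frac{\log\left(\frac{\log(C)}{\log\left(1+\frac{C}{n}\right)}\right)}{\log(n)}.
\]
If this limit is a strictly positive real number, then the numerator grows like $\log(n)$ and the original expression lies in $\Theta(n\log(n))$, as desired.

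The key asymptotic observation is that $\log\left(1+\frac{C}{n}\right)\sim \frac{C}{n}$ as $n\to\infty$, via the standard limit $\log(1+x)/x\to 1$ as $x\to 0$. Consequently the argument of the outer logarithm, $\frac{\log(C)}{\log(1+C/n)}$, is asymptotic to $\frac{n\log(C)}{C}$, which grows linearly in $n$. Writing $\log\left(\frac{n\log(C)}{C}\right)=\log(n)+\log\left(\frac{\log(C)}{C}\right)$ and dividing by $\log(n)$ then gives $1+o(1)$, so the limit above equals $1$ and the $\Theta$ bound follows.

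Rather than chaining asymptotic-equivalence arguments, the cleanest route, and the one consistent with the rest of the paper, is to apply L'H{\^o}pital's rule directly to the ratio after rewriting the numerator as $\log(\log(C))-\log\left(\log\left(1+\frac{C}{n}\right)\right)$. Differentiating numerator and denominator in $n$ yields, after simplification,
\[
\frac{\frac{C}{n}}{\left(1+\frac{C}{n}\right)\cdot\log\left(1+\frac{C}{n}\right)},
\]
whose limit is $1$ since $1+\frac{C}{n}\to 1$ and $\log(1+C/n)\sim C/n$.

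The main subtlety to handle carefully is the propagation of the equivalence $\log(1+C/n)\sim C/n$ through the outer logarithm, since asymptotic equivalence of two quantities does not in general survive taking logarithms. Here it does, because both $\log(1+C/n)$ and $C/n$ tend to $0$, so their logarithms diverge to $-\infty$ while their difference $\log\left(\frac{\log(1+C/n)}{C/n}\right)$ tends to $0$; the bounded discrepancy is negligible against a term diverging to $-\infty$. The L'H{\^o}pital computation sidesteps this issue entirely, which is why I would present it as the primary argument and use the equivalence chain only as a sanity check.
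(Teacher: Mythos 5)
Your proposal is correct and follows essentially the same route as the paper: cancel the common factor of $n$ and evaluate the remaining limit $\lim_{n\rightarrow\infty}\log\left(\frac{\log(C)}{\log\left(1+\frac{C}{n}\right)}\right)/\log(n)$ by L'H{\^o}pital's rule. One remark: your value of $1$ for that limit is the correct one, whereas the paper's own computation drops a factor of $C$ when simplifying the derivative ratio (it writes $\frac{1}{(n+C)\log(1+C/n)}$ where $\frac{C}{(n+C)\log(1+C/n)}$ is meant) and lands on $\frac{1}{C}$; since both are positive finite constants the $\Theta(n\cdot\log(n))$ conclusion is unaffected, and your side observation that the argument implicitly needs $\log(C)>0$ (i.e.\ $C>1$) is a fair point the paper does not address.
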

\begin{proof}
\begin{eqnarray*}
    && \lim\limits_{n\rightarrow\infty} \frac{n\cdot\log\left(\frac{\log(C)}{\log\left(1+\frac{C}{n}\right)}\right)}{n\cdot\log(n)} \\
  &=& \lim\limits_{n\rightarrow\infty} \frac{\log\left(\frac{\log(C)}{\log\left(1+\frac{C}{n}\right)}\right)}{\log(n)} \\
  &=& \lim\limits_{n\rightarrow\infty} \frac{\left(\frac{C}{n\cdot(n+C)\cdot\log\left(1+\frac{C}{n}\right)}\right)}{\left(\frac{1}{n}\right)} \text{~by L'H{\^o}pital's rule}\\
  &=& \lim\limits_{n\rightarrow\infty} \frac{1}{(n+C)\cdot\log\left(1+\frac{C}{n}\right)} \\
  &=& \lim\limits_{n\rightarrow\infty} \frac{\left(\frac{1}{n+C}\right)}{\log\left(1+\frac{C}{n}\right)} \\
  &=& \lim\limits_{n\rightarrow\infty} \frac{\left(\frac{-1}{n^2+2\cdot C\cdot n+C^2}\right)}{\left(\frac{-C}{n^2+C\cdot n}\right)} \text{~by L'H{\^o}pital's rule}\\
  &=& \lim\limits_{n\rightarrow\infty} \frac{n^2+C\cdot n}{C\cdot n^2+2\cdot C^2\cdot n+C^3}\\
  &=& \frac{1}{C} \\
  && \text{~by L'H{\^o}pital's rule}
\end{eqnarray*}
\end{proof}

\end{appendices}


\begin{thebibliography}{99}

\bibitem{akra:solution}
  M.~Akra and L.~Bazzi,
  {\em On the Solution of Linear Recurrence Equations},
  Computational Optimization and Applications, 10(2) (2000), pp.~195--210.

\bibitem{benjamini:controlling}
  Y.~Benjamini, and Y. Hochberg,
  {\em Controlling the false discovery rate: a practical
    and powerful approach to multiple testing}, Journal of the Royal Statistical Society B,
  57 (1995), pp.~289--300.

\bibitem{bently:general}
  J.~L.~Bently, D.~Haken, and J.~B.~Saxe,
  {\em A General Method for Solving Divide-and-Conquer Recurrences},
  SIGACT News, 12(3) (1980), pp.~36--44.
  
\bibitem{blum:time}
  M.~Blum, R.~W. Floyd, V.~R. Pratt, R.~L. Rivest, and R.~E. Tarjan,
  {\em Time bounds for selection}, Journal of Computer and System Sciences,
  7(4) (1973), pp.~448--461.

\bibitem{chazelle:soft}
  B.~Chazelle,
  {\em The soft heap: an approximate priority queue with optimal error rate}, Journal of the ACM (JACM),
  47(6) (200), pp.~1012--1027.

\bibitem{kreitzberg:selection}
  P.~Kreitzberg, K.~Lucke, and O.~Serang
  {\em Selection on {$X_1+X_2+\cdots + X_m$} with layer-ordered heaps},
  arXiv preprint arXiv:1910.11993, (2019), Not yet submitted.

\bibitem{kreitzberg:fast}
  P.~Kreitzberg, J.~Pennington, K.~Lucke, O.~Serang,
  {\em Fast exact computation of the k most abundant isotope peaks with layer-ordered heaps},
  Analytical Chemistry, (Just Accepted) (2020), PMID: 32663022.
  
\bibitem{serang:optimal}
  O.~Serang, {\em Optimal selection on {X} + {Y} simplified with layer-ordered heaps},
  arXiv preprint arXiv:2001.11607, (2020)
  
\end{thebibliography}
\end{document}